\newtheorem{Thm}{Theorem}[section]
\newtheorem{Def}{Definition}[section]
\newtheorem{Lem}[Thm]{Lemma}
\newtheorem{Prop}{Proposition}[section]
\newtheorem{Rem}{Remark}[section]
\newcommand{\hslashslash}{%
  \raisebox{.9ex}{%
    \scalebox{.7}{%
      \rotatebox[origin=c]{18}{$-$}%
      }%
  }%
}
\newcommand{\dslash}{%
  {%
   \vphantom{d}%
   \ooalign{\kern.05em\smash{\hslashslash}\hidewidth\cr$d$\cr}%
   \kern.05em
  }%
}
\def\={\hspace{-3mm}&=&\hspace{-3mm}}
\title{The fractional in time Schr\"{o}dinger equation with a Hartree perturbation}
\author{ Humberto Prado$^a$\footnote{E-mail: humberto.prado@usach.cl} \, and Jos\'e Ram\'irez$^b$\footnote{E-mail: jose.ramirezm@usach.cl} \\
\small{$^{a,b}$ Departamento de Matem\'atica y Ciencia de la
Computaci\'on,} \\
\small{ Universidad de Santiago de Chile }\\
\small{Casilla 307 Correo 2, Santiago-Chile }}
\begin{document}
\date{}

\maketitle


\begin{abstract}
The aim of this work is to show existence, uniqueness and regularity properties of nonlinear fractional Schr\"{o}dinger equation (\ref{ec-princ}) with fractional time derivative  of order $\alpha\in (0,1)$ and with a Hartree-type of  nonlinear term.

\end{abstract}


\section{Introduction}\label{sec1}
 The fractional in time linear Schr\"{o}dinger equation  has been studied in \cite{Gorka2017} in which the abstract  fractional evolution equation has been investigated in the general setting of Hilbert spaces. We point out that the fractional in time Schr\"{o}dinger equation has applications in the context of quantum fractional mechanics; see \cite{L} and the references in there.\\
The main results of this research are strongly motivated by recent investigation on non-linear semi-relativistic Schr\"{o}dinger equations
 e.g., \cite{Hajaiej2013},\cite{Kwon2015},\cite{Herr2015};
This class of equations have interesting applications for a large systems of self-interactions, and the effective description of pseudo-relativistic boson stars via a Coulomb law; see e.g., \cite{Lenzmann2004},\cite{Jonsson2007},\cite{Lewin2011} and the references given there. Nevertheless, to the best of our knowledge the analogous problem with a fractional time derivative has not been  investigated so far. Thus,  our main purpose in this paper is to study the following  time fractional evolution non-linear problem with time fractional derivative in the Caputo sense,
\begin{equation}\label{ec-princ}
\begin{cases}
i^{\alpha}D_t^{\alpha} u(t,x)=(-\Delta)^{\beta/2}u(t,x)+\lambda J_t^{1-\alpha} K_{\gamma}(|u|^2)(x)u(t,x), &\ (t,x)\in[0,T]\times\mathbb{R}^n, \\
\quad\quad\ \ \ u(0,x)=u_0(x),\ u_0\in H^{\beta}(\mathbb{R}^n),     &
\end{cases}
\end{equation}
in which $K_{\gamma}(|u|^2)\,u$ is the Hartree potential. We assume that $\beta>0$, $\alpha\in(0,1),$ $i^{\alpha}=e^{i\alpha\pi/2},$\, and $\lambda\in\mathbb{R}\backslash\{0\}.$ The nonlocal operator $(-\Delta)^{\beta/2}$ is defined as a pseudo-differential operator with the symbol $|\xi|^{\beta}$ on $\mathbb{R}^n$, $J_t^{1-\alpha}$ is the fractional integral in the Riemann-Liouville sense of order $1-\alpha,$ and the non-linear term is defined defined for each $u\in L^p:=L^p(\mathbb{R}^n),$ by the convolution operator,
\begin{equation}\label{htree}
    K_{\gamma}(u)(x)=\int_{\mathbb{R}^n}\frac{\psi(x-y)}{|x-y|^{\gamma}}u(y)dy,
\end{equation}
where $\psi$ is assumed to be nonnegative and bounded, and  $\gamma\in(0,n).$ Henceforth we denote $\psi_{\gamma}:=\displaystyle\frac{\psi}{|\cdot|^{\gamma}},$ and thus
$$K_{\gamma}(u)(x)=(\psi_{\gamma}\ast u)(x);$$ see (\ref{des-Hardy}) and  (\ref{1Leib-16}) below for definitions and further properties.

\section{Preliminaries}

In this section we establish the basic notations, and the technical results which will be used thereafter.

\subsection{The fractional derivative of Caputo}
Hereafter, we denote
$$g_{\alpha}(t)=\displaystyle\frac{t^{\alpha-1}}{\Gamma(\alpha)},\quad \mbox{for}\quad \alpha>0,\quad  t>0.$$
Then we define the Riemmann Liouville integral as
$$J_t^{\alpha}u(t)=\frac{1}{\Gamma(\alpha)}\displaystyle\int_0^t(t-s)^{\alpha-1}u(s)ds,$$ for a given locally integrable function  $u$  defined on  the half line $\mathbb{R}_+:=[0,\infty)$ and taking values on a Banach space $X.$ Henceforth we use the notation,
\begin{align*}
J_t^{\alpha}u(t)=(g_{\alpha}*u)(t),\quad t>0.
\end{align*}

Then the following property holds: $J_t^{\alpha+\gamma}u(t)=J_t^{\alpha}J_t^{\gamma}u(t),$ for $\alpha,\gamma>0,$ in which $u$ is suitable enough.

We shall consider  the following definition of the fractional derivative of order $\alpha\in (0,1).$  Assume  that $u\in C(\mathbb{R}_+;X)$ and that the convolution $g_{1-\alpha}*u$\,\, belongs to\,\, $C^1((0,\infty);X).$ Then the Caputo fractional derivative of order $\alpha\in (0,1),$ can be interpreted as
$$
  D_t^{\alpha} u(t)=\frac{d}{dt}(g_{1-\alpha}*u)(t)-u(0)g_{1-\alpha}(t)=\frac{1}{\Gamma(1-\alpha)}\left[\frac{d}{dt}\left(\int_0^t(t-s)^{-\alpha}u(s)ds\right)-\frac{u(0)}{t^{\alpha}}\right].
$$
Furthermore if $u\in AC(\mathbb{R}_+;X),$ in which $AC(\mathbb{R}_+;X)$ is the space of absolutely continuous functions on $\mathbb{R}_+,$ then we can also realize the Caputo derivative as
\begin{equation}\label{cap1}
D_t^{\alpha}u(t)=J_t^{1-\alpha}u'(t)\quad\quad\mbox{for}\quad\quad \alpha\in(0,1);
\end{equation}
see e.g., \cite{Podlubny1999} for further properties and definitions.

Henceforth we shall denote the Caputo derivative by $D_t^{\alpha} u(t)$.

\begin{Rem}\label{rmk-1}
We recall the Mittag-Leffler function (see e.g., \cite{Podlubny1999}),
\begin{equation*}
E_{\alpha,\eta}(z):=\sum_{n=0}^{\infty}\frac{z^n}{\Gamma(\alpha n+\eta)},\ \alpha>0, \eta\in\mathbb{C},z\in\mathbb{C}.
\end{equation*}
The function $E_{\alpha,\eta}$ is an entire function of $z.$ We denote $E_{\alpha,1}(z)=E_{\alpha}(z),$ for $\alpha>0, z\in\mathbb{C}.$
Next we record the following estimates satisfied by the Mittag-Leffler function \cite[Theorem 1.5, page 35]{Podlubny1999} (see also \cite[Lemma 2.2]{Gorka2017}). For $\alpha\in(0,1),\beta>0$ there exists a positive constant $M_0,$ such that
\begin{equation}\label{Est-Mit}
|E_{\alpha}((-it)^{\alpha}|\xi|^{\beta})|\leq M_0,\quad t>0,\xi\in\mathbb{R}^n.
\end{equation}
\end{Rem}

\subsection{Fractional Sobolev spaces}
For $\beta>0$ and $p\geq 1$, we define the fractional Sobolev space,
\begin{equation}\label{norm-sp}
H^{\beta,p}=H^{\beta,p}(\mathbb{R}^n)=\{u\in L^p:\mathcal{F}^{-1}[(1+|\xi|^2)^{\beta/2}\hat{u}(\xi)]\in L^p\},
\end{equation}
endowed with the norm
\begin{equation*}
\|u\|_{H^{\beta,p}}=\|\mathcal{F}^{-1}[(1+|\xi|^2)^{\beta/2}\hat{u}(\xi)]\|_{L^p},
\end{equation*}
in which $\hat{u}:=\mathcal{F}(u)$ stands for the Fourier transform  of $u.$ Then $H^{\beta,p}$ is Banach space endowed  with the norm $\|\cdot\|_{H^{\beta,p}},$
see \cite{Bergh1976,Taylor2010}.
In particular, we shall denote $H^{\beta,2}$ as $H^{\beta}.$  We define
\begin{equation*}
(I-\Delta)^{\beta/2}(u)=\mathcal{F}^{-1}[(1+|\xi|^2)^{\beta/2}\hat{u}(\xi)],\  \mbox{for}\ u\in H^{\beta,p}.
\end{equation*}
Therefore we denote, $\|u\|_{H^{\beta,p}}=\displaystyle\|(I-\Delta)^{\beta/2}(u)\|_{L^p}.$

\begin{Def}\label{def-lapla}(see \cite{Mateusz2017})
Let $\beta>0$ be fixed. Then we define the fractional laplacian $(-\Delta)^{\beta/2},$ as follows
\begin{equation*}
    (-\Delta)^{\beta/2}u=\mathcal{F}^{-1}[|\xi|^{\beta}\hat{u}(\xi)],\quad u\in \mathcal{D}((-\Delta)^{\beta/2}),\xi\in\mathbb{R}^n,
\end{equation*}
on the domain
\begin{equation}\label{dom-m}
\mathcal{D}((-\Delta)^{\beta/2})=\left\{u\in L^2:\int_{\mathbb{R}^n}|\xi|^{2\beta}|\hat{u}(\xi)|^2d\xi<\infty\right\}.
\end{equation}
\end{Def}
Let $\beta>0.$ Then from the Definition \ref{def-lapla} follows that
\begin{equation*}
\mathcal{D}((-\Delta)^{\beta/2})=H^{\beta}.
\end{equation*}

Let  ${\mathcal{S}}_0:={\mathcal{S}}_0(\mathbb R^n)$ be  the space of all those $u$ in the Schwartz space ${\mathcal{S}}:={\mathcal{S}}(\mathbb R^n)$  such that its Fourier transform $\widehat{u}$ vanishes on a neighborhood of the origin. Then we define for $0<\beta<n/2$ and $1\leq p<\infty$ the homogeneous Sobolev space $\dot{H}^{\beta,p}$ as the completion of ${\mathcal{S}}_0$ with the norm
\begin{equation}\label{Homspace}
\|u\|_{\dot{H}^{\beta,p}}=\|\mathcal{F}^{-1}[|\xi|^{\beta}\hat{u}(\xi)]\|_{L^p}.
\end{equation}

Then, $\dot{H}^{\beta,p}$ is a Banach space contained in the space of tempered distributions ${\mathcal{S}}'$; see \cite{Bergh1976}. In particular, we shall denote $\dot{H}^{\beta,2}$ as $\dot{H}^{\beta}.$


{\rm{We state the following known facts that will be needed in the forthcoming sections.}}

\begin{Rem}\label{rmk-2}\,\rm{(Sobolev's embedding, see \cite{Taylor2010})}.
\begin{itemize}
\item[(i)] Let $\gamma_1\leq \gamma_2,$ $1\leq p\leq\infty.$ Then,
\begin{equation}
    H^{\gamma_2,p}\hookrightarrow H^{\gamma_1,p}.
\end{equation}

\item[(ii)] If $1\leq p<\infty$ and $0<\gamma<\frac{n}{p},$ then,
\begin{equation}
    H^{\gamma,p}\hookrightarrow L^{np/n-\gamma p},
\end{equation}
in particular if $p=2,$\, and\, $0<\gamma/2<n/2$ then,
\begin{equation}\label{h1}
    H^{\gamma/2}\hookrightarrow L^{2n/n-\gamma }.
\end{equation}
\end{itemize}
Furthermore, by (i) for $\gamma_1=\gamma/2,\gamma_2=\beta,p=2$ together with (ii) for $0<\gamma/2<n/2$ we obtain  the embedding,
\begin{equation}\label{h2}
 H^{\beta} \hookrightarrow  H^{\gamma/2}\hookrightarrow L^{2n/n-\gamma }.
\end{equation}
\end{Rem}

\begin{Rem}\label{rmk-3}
\begin{itemize}
\item[(i)]{\rm{Let $0<\beta<1.$ Then the norm $\|\cdot\|_{H^{\beta}}$ of $H^{\beta}$ is equivalent to the graph norm of the fractional Laplacian operator $(-\Delta)^{\beta/2}$ on $ L^{2},$ that is,}}
\begin{equation*}
\|u\|_{(-\Delta)^{\beta/2}}=\|u\|_{L^2}+\|(-\Delta)^{\beta/2}u\|_{L^2},\,\,\,\,\,\,\,\, u\in H^{\beta}.
\end{equation*}

\item[(ii)] {\rm{(Hardy inequality, see \cite{Tao2006}). Let $0\leq s<n/2,$ $u\in \dot{H}^s.$ Then there exists non-negative constant $C$ such that,}}
\begin{equation*}
\left\|\frac{u}{|\cdot|^s}\right\|_{L^2}\leq C\|u\|_{\dot{H}^s}.
\end{equation*}

\item[(iii)] {\rm{(Fractional Leibniz rule, see \cite{Gr2014}). Let $\sigma>0,$ $1<r<\infty$ and $1<p_i,q_i\leq\infty$ and suppose that $\displaystyle\frac{1}{r}=\frac{1}{p_i}+\frac{1}{q_i},$ for $i=1,2$. Then there exists a positive constant $C$ such that for each $u,v\in \mathcal{S},$}}
\begin{equation}\label{Leib-15}
 \displaystyle\|(-\Delta)^{\sigma/2}(uv)\|_{L^r}\leq C\left(\|(-\Delta)^{\sigma/2}u\|_{L^{p_1}}\|v\|_{L^{q_1}}+\|u\|_{L^{p_2}}\|(-\Delta)^{\sigma/2}v\|_{L^{q_2}}\right),
\end{equation}
\begin{equation}\label{1Leib-15}
 \displaystyle\|(I-\Delta)^{\sigma/2}(uv)\|_{L^r}
 \leq C\left(\|(I-\Delta)^{\sigma/2}u\|_{L^{p_1}}\|v\|_{L^{q_1}}
 +\|u\|_{L^{p_2}}\|(I-\Delta)^{\sigma/2}v\|_{L^{q_2}}\right).
\end{equation}

\item [(iv)]\label{des-hlitt} {\rm{(Hardy-Littlewood-Sobolev inequality, see \cite{Stein1993}). Let $\gamma\in (0,n),$ $1<p<q<\infty$, $u\in L^p.$ Then under the assumption that
\begin{equation*}
\frac{1}{q}=\frac{1}{p}-\frac{(n-\gamma)}{n},
\end{equation*}
there exists a positive constant $C$ such that}}
\begin{equation}\label{K-q}
\left\|u \ast \frac{1}{|\cdot|^{\gamma}}\right\|_{L^q}\leq C\|u\|_{L^p},
\end{equation}

\item [(v)] {\rm{Given  $u\in L^p$ and assuming  that $\gamma, p$ and $q $ satisfy the same conditions as in ($iv$) above. Then we obtain the following direct consequence of (\ref{1Leib-15})}}
\begin{equation}\label{1Leib-16}
\|K_{\gamma}(u)\|_{L^q}
\leq \|\psi\|_{\infty}\left\|u \ast \frac{1}{|\cdot|^{\gamma}}\right\|_{L^q}
\leq C \|\psi\|_{\infty}\|u\|_{L^p},
\end{equation}
where $C$ is a positive constant.
\end{itemize}
\end{Rem}

\begin{Lem}\label{des-gam} Let $\gamma\in(0,n),$ $u\in \dot{H}^{\gamma/2}.$ Then there exists a positive constant $C$ such that
\begin{equation}\label{des-Hardy}
\sup_{y\in \mathbb{R}^n}\int_{\mathbb{R}^n}\frac{|u(x)|^2}{|x-y|^{\gamma}}dx\leq C\|u\|^2_{\dot{H}^{\gamma/2}}
\end{equation}
\end{Lem}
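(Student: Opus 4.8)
The plan is to reduce the shifted weight $|x-y|^{-\gamma}$ to the centered Hardy inequality of Remark~\ref{rmk-3}(ii) by a translation, and then to exploit the translation invariance of the homogeneous Sobolev norm $\|\cdot\|_{\dot{H}^{\gamma/2}}$ in order to make the resulting bound uniform in $y$.

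First I would set $s=\gamma/2$. Since $\gamma\in(0,n)$ we have $0<s<n/2$, so Hardy's inequality (Remark~\ref{rmk-3}(ii)) is available for $\dot{H}^{s}=\dot{H}^{\gamma/2}$. For a fixed $y\in\mathbb{R}^n$ I would rewrite the integrand using $|x-y|^{\gamma}=|x-y|^{2s}$, so that
\[
\int_{\mathbb{R}^n}\frac{|u(x)|^2}{|x-y|^{\gamma}}\,dx=\left\|\frac{u}{|\cdot-y|^{s}}\right\|_{L^2}^2,
\]
and then the change of variables $x=z+y$ gives
\[
\left\|\frac{u}{|\cdot-y|^{s}}\right\|_{L^2}^2=\int_{\mathbb{R}^n}\frac{|u(z+y)|^2}{|z|^{2s}}\,dz=\left\|\frac{\tau_y u}{|\cdot|^{s}}\right\|_{L^2}^2,
\]
where $\tau_y u:=u(\cdot+y)$ denotes the translate of $u$.

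Next I would apply Hardy's inequality to $\tau_y u$, which yields $\|\tau_y u/|\cdot|^{s}\|_{L^2}\le C\|\tau_y u\|_{\dot{H}^{\gamma/2}}$ with one and the same universal constant $C$ for every $y$. The crucial point is that the homogeneous norm is translation invariant: by Plancherel (recall $\|v\|_{\dot{H}^{\gamma/2}}=\||\xi|^{\gamma/2}\hat v\|_{L^2}$ for $p=2$ in \eqref{Homspace}) and the modulation identity $\widehat{\tau_y u}(\xi)=e^{i y\cdot\xi}\hat u(\xi)$ we have
\[
\|\tau_y u\|_{\dot{H}^{\gamma/2}}=\big\||\xi|^{\gamma/2}\,e^{i y\cdot\xi}\hat u(\xi)\big\|_{L^2}=\big\||\xi|^{\gamma/2}\,\hat u(\xi)\big\|_{L^2}=\|u\|_{\dot{H}^{\gamma/2}},
\]
since $|e^{i y\cdot\xi}|=1$. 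Combining the three displays gives, for every $y$,
\[
\int_{\mathbb{R}^n}\frac{|u(x)|^2}{|x-y|^{\gamma}}\,dx\le C^2\,\|u\|_{\dot{H}^{\gamma/2}}^2,
\]
and since the right-hand side is independent of $y$, taking the supremum over $y\in\mathbb{R}^n$ and relabelling $C^2$ as $C$ finishes the proof.

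I do not anticipate a serious obstacle here; the only point requiring care is the uniformity in $y$, which is exactly what the translation invariance of $\|\cdot\|_{\dot{H}^{\gamma/2}}$ secures, so that the universal constant furnished by Hardy's inequality is unaffected by the shift. If one wishes to be scrupulous, the density of $\mathcal{S}_0$ in $\dot{H}^{\gamma/2}$ allows one to establish the estimate first for Schwartz functions and then pass to the limit for general $u\in\dot{H}^{\gamma/2}$.
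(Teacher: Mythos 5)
Your proof is correct and follows essentially the same route as the paper: a change of variables to shift the singularity to the origin, the centered Hardy inequality of Remark~\ref{rmk-3}(ii) with $s=\gamma/2$, and the translation invariance of $\|\cdot\|_{\dot{H}^{\gamma/2}}$ verified on the Fourier side, which makes the constant uniform in $y$. The only differences are cosmetic (your translation convention $u(\cdot+y)$ versus the paper's $u(\cdot-y)$, and your closing remark on density, which the paper omits).
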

\begin{proof} We denote the operator translation $\tau_y$ of $u$ by the vector $y\in\mathbb{R}^n$ as $\tau_yu(x):=u(x-y).$
Then $\tau_y$ is an isometry over space $\dot{H}^{\gamma/2}$, for $\gamma>0,$
\begin{equation}\label{iso}
\|\tau_yu\|^2_{\dot{H}^{\gamma/2}}
=\int_{\mathbb{R}^n}|\xi|^{\gamma}|\mathcal{F}(\tau_yu)(\xi)|^2d\xi
=\int_{\mathbb{R}^n}|\xi|^{\gamma}|\mathcal{F}(u)(\xi)|^2d\xi
=\|u\|^2_{\dot{H}^{\gamma/2}}.
\end{equation}
Since the Lebesgue  measure is invariant under translations and by  Remark \ref{rmk-3} part (ii) for $s:=\gamma/2$ together with the identity (\ref{iso}) we find that
\begin{equation*}
\begin{split}
\sup_{y\in \mathbb{R}^n}\int_{\mathbb{R}^n}\frac{|u(x)|^2}{|x-y|^{\gamma}}dx
&=\sup_{y\in \mathbb{R}^n}\int_{\mathbb{R}^n}\frac{|u(x-y)|^2}{|x|^{\gamma}}dx\\
&=\sup_{y\in \mathbb{R}^n}\left\|\frac{\tau_y u(\cdot)}{|\cdot|^{\gamma/2}}\right\|^2_{L^2}\\
&\leq C\sup_{y\in \mathbb{R}^n}\|\tau_y u\|^2_{\dot{H}^{\gamma/2}}\\
&= C\|u\|^2_{\dot{H}^{\gamma/2}},
\end{split}
\end{equation*}
for some  $C>0.$
\end{proof}

The next theorem is a direct consequence of {\cite[Theorem 6.3.2, page 148]{Bergh1976}}.
\begin{Thm}\label{teo-fund}Let $\beta>0$, $1<p<\infty,$ $u\in H^{\beta,p}.$ Then there exists a positive constant $C$ such that
\begin{equation}\label{Dot}
\|u\|_{\dot{H}^{\beta,p}}\leq C\|u\|_{H^{\beta,p}}.
\end{equation}
\end{Thm}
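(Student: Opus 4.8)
The plan is to express both quantities as $L^p$ norms of Fourier multiplier operators acting on $u$, and then to compare them by showing that the ratio of the two symbols defines a bounded Fourier multiplier on $L^p$ for $1<p<\infty$. First I would set $v:=(I-\Delta)^{\beta/2}u$, so that by the definition of the inhomogeneous norm $\|v\|_{L^p}=\|u\|_{H^{\beta,p}}$ and $\hat v(\xi)=(1+|\xi|^2)^{\beta/2}\hat u(\xi)$. Using the definition (\ref{Homspace}) of the homogeneous norm together with Definition \ref{def-lapla}, I would rewrite the symbol of $(-\Delta)^{\beta/2}$ in terms of $v$:
\begin{equation*}
\mathcal{F}[(-\Delta)^{\beta/2}u](\xi)=|\xi|^{\beta}\hat u(\xi)=\frac{|\xi|^{\beta}}{(1+|\xi|^2)^{\beta/2}}\,\hat v(\xi)=:m(\xi)\,\hat v(\xi).
\end{equation*}
Thus $(-\Delta)^{\beta/2}u=T_m v$, where $T_m$ denotes the operator with Fourier symbol $m$, and it suffices to prove that $T_m$ is bounded on $L^p$.

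The boundedness of $T_m$ follows from a Mikhlin--H\"ormander type multiplier theorem, which is the content of the cited \cite[Theorem 6.3.2]{Bergh1976}. The hypothesis to be verified is the Mikhlin condition, namely that $|\xi|^{|\theta|}\,|\partial^{\theta}_{\xi}m(\xi)|$ is bounded uniformly in $\xi\neq 0$ for all multi-indices $\theta$ with $|\theta|\leq \lfloor n/2\rfloor+1$. I would verify this by writing $m(\xi)=\bigl(|\xi|^2/(1+|\xi|^2)\bigr)^{\beta/2}$ and splitting into the regions near the origin and near infinity. Since $m$ is smooth on $\mathbb{R}^n\setminus\{0\}$, only the decay of its derivatives in these two regions must be controlled: near the origin one has $\partial^{\theta}_{\xi}m(\xi)=O(|\xi|^{\beta-|\theta|})$, so that $|\xi|^{|\theta|}|\partial^{\theta}_{\xi}m(\xi)|=O(|\xi|^{\beta})$ remains bounded; near infinity $m(\xi)\to 1$ with each derivative gaining a factor $|\xi|^{-1}$, so the condition again holds.

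With the multiplier theorem in hand, I would conclude
\begin{equation*}
\|u\|_{\dot H^{\beta,p}}=\|(-\Delta)^{\beta/2}u\|_{L^p}=\|T_m v\|_{L^p}\leq C\|v\|_{L^p}=C\|u\|_{H^{\beta,p}},
\end{equation*}
which is exactly (\ref{Dot}). The main obstacle is the verification of the Mikhlin condition at the origin, where $|\xi|^{\beta}$ is non-smooth for non-even $\beta$; the device of writing $m$ as a power of $|\xi|^2/(1+|\xi|^2)$ and tracking the homogeneity degrees of the derivatives is what resolves this difficulty cleanly, since the problematic factor $|\xi|^{\beta}$ carries a nonnegative power that dominates near zero.
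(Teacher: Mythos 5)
Your proposal is correct and follows essentially the same route as the paper: both factor the homogeneous symbol as $|\xi|^{\beta}\hat u(\xi)=m(\xi)\,\mathcal{F}\bigl((I-\Delta)^{\beta/2}u\bigr)(\xi)$ with $m(\xi)=|\xi|^{\beta}(1+|\xi|^2)^{-\beta/2}$ and reduce the inequality to the $L^p$-boundedness of the multiplier $m$. The only difference is that the paper cites this boundedness directly (Grafakos, p.~449), whereas you verify the Mikhlin--H\"ormander condition yourself --- and your verification, including the treatment of the non-smooth point at the origin via the bound $|\xi|^{|\theta|}|\partial^{\theta}_{\xi}m(\xi)|=O(|\xi|^{\beta})$, is sound.
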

\begin{proof}
We denote  $m(\xi):=|\xi|^{\beta}(1+|\xi|^2)^{-\beta/2},$ then we notice that $m(\xi)$ is an $L^p-$Fourier multiplier on $\mathbb{R}^n$ for $1<p<\infty,$ see e.g., \cite[page 449]{Gr2008}. Now we let $u\in H^{\beta,p}$. Then it follows that
\begin{equation*}
\begin{split}
\|u\|_{\dot{H}^{\beta,p}}
&:=\|\mathcal{F}^{-1}[|\xi|^{\beta}\hat{u}(\xi)]\|_{L^p} \\
&=\|\mathcal{F}^{-1}[|\xi|^{\beta}(1+|\xi|^2)^{-\beta/2}(1+|\xi|^2)^{\beta/2}\hat{u}(\xi)]\|_{L^p}  \\
&=\|\mathcal{F}^{-1}[m(\xi)\mathcal{F}((I-\Delta)^{\beta/2}u)(\xi)]\|_{L^p} \\
&\leq C\|(I-\Delta)^{\beta/2}u\|_{L^p}=C\|u\|_{H^{\beta,p}},
\end{split}
\end{equation*}
where $C$ is a positive constant.
\end{proof}

\begin{Prop}\label{prop1} Let $\sigma>0,$ $1<r<\infty$ and $1<p_i,q_i\leq\infty$ and $\frac{1}{r}=\frac{1}{p_i}+\frac{1}{q_i},$ for $i=1,2.$
Suppose that $u\in H^{\sigma,p_1}\cap L^{p_2},v\in H^{\sigma,q_2}\cap L^{q_1}.$ Then $u\,v\in \dot{H}^{\sigma,r}$ and
\begin{equation}\label{Leib-G}
\displaystyle\|(-\Delta)^{\sigma/2}(uv)\|_{L^r}\leq C\left(\|u\|_{H^{\sigma,p_1}}\|v\|_{L^{q_1}}
 +\|u\|_{L^{p_2}}\|v\|_{H^{\sigma,q_2}}\right),
\end{equation}
where $C>0.$
\end{Prop}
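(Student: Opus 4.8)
The plan is to reduce everything to Schwartz data, where the homogeneous Leibniz inequality (\ref{Leib-15}) is already available, and then to trade the homogeneous seminorms appearing on its right-hand side for the inhomogeneous Sobolev norms by means of Theorem \ref{teo-fund}. Once the estimate (\ref{Leib-G}) is established for $u,v\in\mathcal{S}$, a density argument upgrades it to the stated hypotheses and simultaneously produces the membership $uv\in\dot{H}^{\sigma,r}$.

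First I would take $u,v\in\mathcal{S}$ and apply (\ref{Leib-15}) with the given exponents, obtaining
$$\|(-\Delta)^{\sigma/2}(uv)\|_{L^r}\leq C\left(\|(-\Delta)^{\sigma/2}u\|_{L^{p_1}}\|v\|_{L^{q_1}}+\|u\|_{L^{p_2}}\|(-\Delta)^{\sigma/2}v\|_{L^{q_2}}\right).$$
Since the derivative falls on $u$ in the first term and on $v$ in the second, the relevant exponents there are $p_1$ and $q_2$, which I take to lie in $(1,\infty)$; Theorem \ref{teo-fund} then gives $\|(-\Delta)^{\sigma/2}u\|_{L^{p_1}}\leq C\|u\|_{H^{\sigma,p_1}}$ and $\|(-\Delta)^{\sigma/2}v\|_{L^{q_2}}\leq C\|v\|_{H^{\sigma,q_2}}$. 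Substituting these two bounds yields (\ref{Leib-G}) for all $u,v\in\mathcal{S}$, with a constant depending only on $\sigma,n,r$ and the exponents.

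Next I would pass to general $u\in H^{\sigma,p_1}\cap L^{p_2}$ and $v\in H^{\sigma,q_2}\cap L^{q_1}$ by approximation. A standard mollification-and-truncation procedure furnishes sequences $u_k,v_k\in\mathcal{S}$ with $u_k\to u$ in $H^{\sigma,p_1}\cap L^{p_2}$ and $v_k\to v$ in $H^{\sigma,q_2}\cap L^{q_1}$. Writing $u_kv_k-u_lv_l=(u_k-u_l)v_k+u_l(v_k-v_l)$ and applying the Schwartz-level estimate to each summand shows, using the boundedness of the convergent sequences in the respective norms, that $\{(-\Delta)^{\sigma/2}(u_kv_k)\}_k$ is Cauchy in $L^r$, hence convergent to some $w\in L^r$. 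On the other hand, the embedding $H^{\sigma,p_1}\hookrightarrow L^{p_1}$ (Remark \ref{rmk-2}(i)) together with H\"older's inequality for the pair $\frac{1}{r}=\frac{1}{p_1}+\frac{1}{q_1}$ gives $u_kv_k\to uv$ in $L^r$; hence $uv\in L^r$ and, by continuity of $(-\Delta)^{\sigma/2}$ on $\mathcal{S}'$, one identifies $w=(-\Delta)^{\sigma/2}(uv)$. Thus $uv\in\dot{H}^{\sigma,r}$, and passing to the limit in the Schwartz-level estimate delivers (\ref{Leib-G}).

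The main obstacle I anticipate is the density/extension step rather than the inequality itself, which is immediate once the right ingredients are combined. In particular one must produce a single approximating sequence converging in the intersection norm and must justify identifying the $L^r$-limit of $(-\Delta)^{\sigma/2}(u_kv_k)$ with $(-\Delta)^{\sigma/2}(uv)$ in the distributional sense. The genuinely delicate point is the treatment of the endpoint exponents: when $p_1=\infty$ or $q_2=\infty$ Theorem \ref{teo-fund} no longer applies (its multiplier argument is restricted to $1<p<\infty$), and when $p_2=\infty$ or $q_1=\infty$ the space $\mathcal{S}$ fails to be dense, so these cases would require either an independent $L^\infty$-type multiplier bound or a modified approximation scheme in which only the factor lying in a finite-exponent space is regularized.
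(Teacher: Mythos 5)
Your proposal is correct and takes essentially the same route as the paper's proof: both establish the estimate for $u,v\in\mathcal{S}$ by combining the fractional Leibniz rule (\ref{Leib-15}) with Theorem \ref{teo-fund}, and then extend by density of $\mathcal{S}$ in the intersection spaces (the paper phrases this as the unique continuous extension of the bilinear map $D^{\sigma}(u,v)=(-\Delta)^{\sigma/2}(uv)$, while you run the equivalent Cauchy-sequence argument explicitly, with the added care of identifying the $L^r$-limit with $(-\Delta)^{\sigma/2}(uv)$ in $\mathcal{S}'$). The endpoint caveat you flag is apt but applies equally to the paper's own argument, which likewise needs $1<p_1,q_2<\infty$ for Theorem \ref{teo-fund} and finite exponents for the density step, yet is later invoked with $p_2=\infty$ without comment.
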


\begin{proof}
Consider $D^{\sigma}:\mathcal{S}\times \mathcal{S}\to L^r$ the bilinear form defined as
\begin{equation*}
(u,v)\mapsto D^{\sigma}(u,v)=(-\Delta)^{\sigma/2}(uv).
\end{equation*}
We claim that $D^{\sigma}$ is continuous. In fact, we endow $\mathcal{S}\times \mathcal{S}$ with the norm of $(H^{\sigma,p_1}\cap L^{p_2})\times (H^{\sigma,q_2}\cap L^{q_1})$. Assume that
$(u_n,v_n)\in\mathcal{S}\times \mathcal{S}$ and $u_n\rightarrow0$ in the norm of $H^{\sigma,p_1}\cap L^{p_2},$ $v_n\rightarrow0$ in the norm of $H^{\sigma,q_2}\cap L^{q_1}$  that is
\begin{equation*}
\begin{split}
\|u_n\|_{H^{\sigma,p_1}\cap L^{p_2}}&=\|u_n\|_{H^{\sigma,p_1}}+\|u_n\|_{L^{p_2}}\xrightarrow[n\to \infty]{}0 \\
\|v_n\|_{H^{\sigma,q_2}\cap L^{q_1}}&=\|v_n\|_{H^{\sigma,q_2}}+\|v_n\|_{L^{q_1}}\xrightarrow[n\to \infty]{}0.
\end{split}
\end{equation*}
Hence by Remark \ref{rmk-3} part (iii) and Theorem \ref{teo-fund} we have that
\begin{equation*}
\begin{split}
\|D^{\sigma}(u_n,v_n)\|_{L^r}
&\leq \|u_n\|_{\dot{H}^{\sigma,p_1}}\|v_n\|_{L^{q_1}}+\|u_n\|_{L^{p_2}}\|v_n\|_{\dot{H}^{\sigma,q_2}}\\
&\leq \|u_n\|_{H^{\sigma,p_1}}\|v_n\|_{L^{q_1}}+\|u_n\|_{L^{p_2}}\|v_n\|_{H^{\sigma,q_2}}
\xrightarrow[n\to \infty]{}0,
\end{split}
\end{equation*}
Thus $D^{\sigma}$ is a continuous bilinear map on $\mathcal{S}\times\mathcal{S}.$ On the other hand, since
$\overline{\mathcal{S}}=H^{\sigma,p_1}\cap L^{p_2}$ and $\overline{\mathcal{S}}=H^{\sigma,q_2}\cap L^{q_1}$ (see \cite{Bergh1976}). Thus there exists a unique continuous extension of $D^{\sigma}$ from $H^{\sigma,p_1}\cap L^{p_2}\times H^{\sigma,q_2}\cap L^{q_1}\to L^r,$ in which we denote the extended map as $(-\Delta)^{\sigma/2}.$ Moreover by density it follows that the unique extension satisfy (\ref{Leib-G}), i.e,
\begin{equation*}
\displaystyle\|(-\Delta)^{\sigma/2}(uv)\|_{L^r}\leq C\left(\|u\|_{H^{\sigma,p_1}}\|v\|_{L^{q_1}}
 +\|u\|_{L^{p_2}}\|v\|_{H^{\sigma,q_2}}\right),
\end{equation*}
for each $u\in H^{\sigma,p_1}\cap L^{p_2},$ and $v \in H^{\sigma,q_2}\cap L^{q_1}.$
\end{proof}

\begin{Prop}\label{prop2} Let $\sigma>0,$ $1<r<\infty$ and $1<p_i,q_i\leq\infty$ and $\frac{1}{r}=\frac{1}{p_i}+\frac{1}{q_i},$ for $i=1,2.$
Suppose that $u\in H^{\sigma,p_1}\cap L^{p_2},$ and \,$v\in H^{\sigma,q_2}\cap L^{q_1}.$ Then $u\,v\in H^{\sigma,r}.$ Furthermore,
\begin{equation}\label{Leib1-15}
\displaystyle\|(I-\Delta)^{\sigma/2}(uv)\|_{L^r}\leq C\left(\|u\|_{H^{\sigma,p_1}}\|v\|_{L^{q_1}}+\|u\|_{L^{p_2}}\|v\|_{H^{\sigma,q_2}}\right),
\end{equation}
for some $C>0.$
\end{Prop}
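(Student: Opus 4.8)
The plan is to mirror the structure of the proof of Proposition \ref{prop1}, replacing the homogeneous operator $(-\Delta)^{\sigma/2}$ with the inhomogeneous operator $(I-\Delta)^{\sigma/2}$ throughout. First I would define the bilinear form
\begin{equation*}
\widetilde{D}^{\sigma}:\mathcal{S}\times\mathcal{S}\to L^r,\qquad \widetilde{D}^{\sigma}(u,v)=(I-\Delta)^{\sigma/2}(uv),
\end{equation*}
and endow $\mathcal{S}\times\mathcal{S}$ with the norm of $(H^{\sigma,p_1}\cap L^{p_2})\times(H^{\sigma,q_2}\cap L^{q_1})$. The key observation is that the inhomogeneous fractional Leibniz rule (\ref{1Leib-15}) from Remark \ref{rmk-3} part (iii) gives directly, for $u,v\in\mathcal{S}$,
\begin{equation*}
\|\widetilde{D}^{\sigma}(u,v)\|_{L^r}\leq C\left(\|u\|_{H^{\sigma,p_1}}\|v\|_{L^{q_1}}+\|u\|_{L^{p_2}}\|v\|_{H^{\sigma,q_2}}\right),
\end{equation*}
since the terms $\|(I-\Delta)^{\sigma/2}u\|_{L^{p_1}}$ and $\|(I-\Delta)^{\sigma/2}v\|_{L^{q_2}}$ are by definition exactly $\|u\|_{H^{\sigma,p_1}}$ and $\|v\|_{H^{\sigma,q_2}}$. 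Notice this is actually cleaner than the homogeneous case, because here no invocation of Theorem \ref{teo-fund} is needed to pass from the homogeneous norm to the inhomogeneous one; the estimate already lands on the right-hand side in its final form.

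Next I would establish continuity of $\widetilde{D}^{\sigma}$ as a bilinear map: taking sequences $u_n\to 0$ in $H^{\sigma,p_1}\cap L^{p_2}$ and $v_n\to 0$ in $H^{\sigma,q_2}\cap L^{q_1}$, the displayed bound forces $\|\widetilde{D}^{\sigma}(u_n,v_n)\|_{L^r}\to 0$. Then, using that $\mathcal{S}$ is dense in $H^{\sigma,p_1}\cap L^{p_2}$ and in $H^{\sigma,q_2}\cap L^{q_1}$ (see \cite{Bergh1976}), there is a unique continuous bilinear extension of $\widetilde{D}^{\sigma}$ to $(H^{\sigma,p_1}\cap L^{p_2})\times(H^{\sigma,q_2}\cap L^{q_1})\to L^r$, which I would again denote $(I-\Delta)^{\sigma/2}$, and which satisfies (\ref{Leib1-15}) by density.

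The one genuinely new point — the only step that does not appear verbatim in Proposition \ref{prop1} — is the membership conclusion $uv\in H^{\sigma,r}$, as opposed to the homogeneous $\dot{H}^{\sigma,r}$. I expect this to be the main thing requiring care. The estimate (\ref{Leib1-15}) shows $(I-\Delta)^{\sigma/2}(uv)\in L^r$, and since $uv\in L^r$ follows from Hölder's inequality (writing $\tfrac1r=\tfrac1{p_1}+\tfrac1{q_1}$ and using $H^{\sigma,p_1}\hookrightarrow L^{p_1}$, $v\in L^{q_1}$), the definition (\ref{norm-sp}) of $H^{\sigma,r}$ is met: both $uv$ and $\mathcal{F}^{-1}[(1+|\xi|^2)^{\sigma/2}\widehat{uv}]$ lie in $L^r$. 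Hence $uv\in H^{\sigma,r}$ with $\|uv\|_{H^{\sigma,r}}=\|(I-\Delta)^{\sigma/2}(uv)\|_{L^r}$ controlled by the right-hand side of (\ref{Leib1-15}), completing the proof.
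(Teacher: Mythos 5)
Your proof is correct and follows essentially the same route as the paper, whose proof of Proposition \ref{prop2} simply invokes the inhomogeneous Leibniz rule (\ref{1Leib-15}) and repeats the density-extension argument of Proposition \ref{prop1}; you rightly observe that Theorem \ref{teo-fund} becomes unnecessary since $\|(I-\Delta)^{\sigma/2}u\|_{L^{p_1}}=\|u\|_{H^{\sigma,p_1}}$ exactly. Your added verification that $uv\in L^r$ via H\"older and $H^{\sigma,p_1}\hookrightarrow L^{p_1}$, needed for the membership $uv\in H^{\sigma,r}$ under definition (\ref{norm-sp}), is a detail the paper leaves implicit and is a welcome refinement.
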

\begin{proof} The proof is a direct consequence of the fractional Leibniz rule given by Remark \ref{rmk-3} (iii). Then,  as we argue in the proof of the Proposition \ref{prop1} the proof now follows.
\end{proof}

Next we state some properties for the operator $ K_{\gamma}$ defined in (\ref{htree}).


\begin{Lem}\label{lem1} Let $\gamma\in(0,n),$ $u\in \dot{H}^{\gamma/2}.$ Then there exists a positive constant $C$ such that
\begin{equation*}
\|K_{\gamma}(|u|^2)\|_{L^{\infty}}\leq C\|u\|^2_{\dot{H}^{\gamma/2}},
\end{equation*}
for $\psi\in L^{\infty}.$
\end{Lem}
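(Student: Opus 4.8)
The plan is to exploit that the whole statement is essentially a reformulation of Lemma \ref{des-gam}, with the factor $\psi$ merely contributing a bounded constant. First I would unfold the definition of the Hartree potential: since $K_{\gamma}(|u|^2)=\psi_{\gamma}\ast|u|^2$, for every $x\in\mathbb{R}^n$ we have
\begin{equation*}
K_{\gamma}(|u|^2)(x)=\int_{\mathbb{R}^n}\frac{\psi(x-y)}{|x-y|^{\gamma}}\,|u(y)|^2\,dy.
\end{equation*}
Using that $\psi$ is nonnegative and bounded, so that $0\le\psi(x-y)\le\|\psi\|_{\infty}$, the integrand is controlled pointwise and I obtain the bound
\begin{equation*}
\bigl|K_{\gamma}(|u|^2)(x)\bigr|\le\|\psi\|_{\infty}\int_{\mathbb{R}^n}\frac{|u(y)|^2}{|x-y|^{\gamma}}\,dy,
\qquad x\in\mathbb{R}^n.
\end{equation*}

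Next I would take the supremum over $x\in\mathbb{R}^n$. The key observation is that the right-hand side is exactly the quantity estimated in Lemma \ref{des-gam}: taking the $L^{\infty}$-norm in the variable $x$ amounts to taking the supremum over the translation parameter appearing there. Concretely,
\begin{equation*}
\|K_{\gamma}(|u|^2)\|_{L^{\infty}}
\le\|\psi\|_{\infty}\,\sup_{x\in\mathbb{R}^n}\int_{\mathbb{R}^n}\frac{|u(y)|^2}{|x-y|^{\gamma}}\,dy
\le C\,\|\psi\|_{\infty}\,\|u\|^2_{\dot{H}^{\gamma/2}},
\end{equation*}
where the last inequality is precisely the conclusion of Lemma \ref{des-gam} (after relabelling the roles of $x$ and $y$, which is legitimate since the Lebesgue measure and the kernel $|x-y|^{-\gamma}$ are symmetric). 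Absorbing the fixed finite factor $\|\psi\|_{\infty}$ into the constant yields the claimed estimate with a new constant $C>0$.

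I do not expect any genuine obstacle here, since the hypothesis $u\in\dot{H}^{\gamma/2}$ with $\gamma\in(0,n)$ is exactly what is needed to invoke Lemma \ref{des-gam}, and the boundedness of $\psi$ does all the remaining work. The only point requiring a word of care is the interchange of the supremum in $x$ with the translation supremum of Lemma \ref{des-gam}; this is immediate from the symmetry of the convolution kernel and needs no additional regularity beyond $\psi\in L^{\infty}$. Hence the proof is a short, direct application of the preceding lemma rather than a separate argument.
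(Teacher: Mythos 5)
Your proposal is correct and follows essentially the same route as the paper's own proof: bound $\psi$ by $\|\psi\|_{L^{\infty}}$ inside the convolution and then invoke Lemma \ref{des-gam} (inequality (\ref{des-Hardy})) to control $\sup_{x}\int_{\mathbb{R}^n}|u(y)|^2|x-y|^{-\gamma}\,dy$ by $C\|u\|^2_{\dot{H}^{\gamma/2}}$. The only cosmetic difference is that the paper writes the convolution as $\int\psi_{\gamma}(y)|u(x-y)|^2\,dy$ while you keep the kernel in the form $\psi_{\gamma}(x-y)|u(y)|^2$ and relabel variables, which is the same change of variables the paper performs implicitly.
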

\begin{proof}
We will consider $\gamma\in(0,n),$ $u\in \dot{H}^{\gamma/2}.$ Then by  Lemma \ref{des-gam} inequality (\ref{des-Hardy}) there exists $C>0$ such that
\begin{equation}\label{K-desig}
\begin{split}
\|K_{\gamma}(|u|^2)\|_{L^{\infty}}&:=\sup_{x\in \mathbb{R}^n}|(\psi_{\gamma}\ast |u|^2)(x)|\\
&\leq \|\psi\|_{L^{\infty}}\cdot \sup_{x\in \mathbb{R}^n}\int_{\mathbb{R}^n}\frac{|u(x-y)|^2}{|y|^{\gamma}}dy\\
&\leq C\|u\|^2_{\dot{H}^{\gamma/2}},
\end{split}
\end{equation}
where $\displaystyle\psi_{\gamma}=\frac{\psi}{|\cdot|^{\gamma}}, \psi\in L^{\infty}.$
\end{proof}


\begin{Lem}\label{lem2} Let $\gamma\in(0,n),$ $\beta\geq \gamma/2,$ $u\in H^{\beta}.$ Then there exists a positive constant $C$ such that
\begin{equation}
\|K_{\gamma}(|u|^2)\|_{L^{2n/\gamma}}\leq C\|u\|_{L^{2n/n-\gamma}}\|u\|_{H^{\beta}}.
\end{equation}
in which we assume that $\psi\in L^{\infty}.$
\end{Lem}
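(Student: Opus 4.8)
The plan is to combine the Hardy--Littlewood--Sobolev bound for $K_\gamma$ recorded in (\ref{1Leib-16}) with a H\"older factorization of $|u|^2$. First I would determine the exponent that HLS forces. Since the target norm is $L^{2n/\gamma}$, I set $q=2n/\gamma$ in the relation $\frac{1}{q}=\frac{1}{p}-\frac{n-\gamma}{n}$ of Remark \ref{rmk-3}(iv), which gives $\frac{1}{p}=\frac{\gamma}{2n}+1-\frac{\gamma}{n}=1-\frac{\gamma}{2n}$, that is $p=\frac{2n}{2n-\gamma}$. Because $\gamma\in(0,n)$ one checks that $1<p<2<q<\infty$, so the admissibility hypotheses of (\ref{1Leib-16}) hold and therefore
$$\|K_{\gamma}(|u|^2)\|_{L^{2n/\gamma}}\leq C\|\psi\|_{\infty}\,\big\||u|^2\big\|_{L^{2n/(2n-\gamma)}}.$$

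Next I would factor $|u|^2=u\cdot u$ and apply H\"older's inequality with the pair of exponents $\frac{2n}{n-\gamma}$ and $2$; these are admissible since $\frac{n-\gamma}{2n}+\frac{1}{2}=\frac{2n-\gamma}{2n}$, which is exactly $1/p$. This yields
$$\big\||u|^2\big\|_{L^{2n/(2n-\gamma)}}\leq \|u\|_{L^{2n/(n-\gamma)}}\,\|u\|_{L^2}.$$
Finally, since $(1+|\xi|^2)^{\beta/2}\geq 1$ pointwise, one has $\|u\|_{L^2}\leq\|u\|_{H^{\beta}}$, and absorbing $\|\psi\|_{\infty}$ into the constant $C$ produces the claimed estimate. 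I note that the hypothesis $\beta\geq\gamma/2$ guarantees, through the embedding (\ref{h2}), that the factor $\|u\|_{L^{2n/(n-\gamma)}}$ appearing on the right is finite for every $u\in H^{\beta}$, so the inequality is meaningful.

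I would not expect any serious obstacle: the whole content is the correct bookkeeping of Lebesgue exponents so that a single application of HLS and a single H\"older split align perfectly. The only points requiring care are verifying the HLS admissibility range $1<p<q<\infty$ (which uses $\gamma\in(0,n)$) and confirming that the two H\"older exponents reproduce $1/p$ exactly; both follow immediately from the arithmetic carried out above.
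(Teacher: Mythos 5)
Your proof is correct and follows essentially the same route as the paper's: both apply the Hardy--Littlewood--Sobolev bound (\ref{1Leib-16}) with $q=2n/\gamma$, $p=2n/(2n-\gamma)$, then H\"older's inequality to split $\||u|^2\|_{L^{2n/(2n-\gamma)}}\leq\|u\|_{L^{2n/(n-\gamma)}}\|u\|_{L^2}$, and finally the embedding $H^{\beta}\hookrightarrow L^2$ (which you justify via $(1+|\xi|^2)^{\beta/2}\geq 1$). Your explicit verification of the HLS admissibility range and of the H\"older exponent arithmetic is a welcome addition that the paper leaves implicit.
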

\begin{proof}
Let us take $u\in H^{\beta}.$ Then by inclusions (\ref{h1}), (\ref{h2}),  and H\"{o}lder´s inequality it follows that $|u|^2\in L^{2n/2n-\gamma}.$
Furthermore, from (\ref{htree}) it follows that
\begin{equation}\label{mod1}
\left|K_{\gamma}(|u|^2)(x)\right|\leq \|\psi\|_{\infty}\left(|u|^2\ast\frac{1}{|\cdot|^{\gamma}}\right)(x),\quad \psi\in L^{\infty}.
\end{equation}
Thus by Remark \ref{rmk-3} (v) for $q=2n/\gamma$, $p=2n/2n-\gamma$ and $|u|^2\in L^{2n/2n-\gamma}(\mathbb{R}^n)$ together with $H^{\beta}\hookrightarrow L^2$, there exists a constant $C>0$ such that,
\begin{equation}\label{K-2n}
\begin{split}
\|K_{\gamma}(|u|^2)\|_{L^{2n/\gamma}}
&\leq \|\psi\|_{\infty} \left\| |u|^2\ast\frac{1}{|\cdot|^{\gamma}}\right\|_{L^{2n/\gamma}}\\
&\leq C \|\psi\|_{\infty} \||u|^2\|_{L^{2n/2n-\gamma}}\\
&\leq C \|u\|_{L^{2n/n-\gamma}}\|u\|_{L^2}\\
&\leq C \|u\|_{L^{2n/n-\gamma}}\|u\|_{H^{\beta}},\quad u\in H^{\beta}.
\end{split}
\end{equation}
\end{proof}

\begin{Lem}\label{lem3} Let $\gamma\in(0,n)$, $\beta\geq\gamma/2$ be fixed. Then for each $u\in H^{\beta},$ there exists a positive constant $C$ such that
\begin{equation}\label{Kineq}
\|K_{\gamma}(|u|^2)\|_{H^{\beta,2n/\gamma}}\leq C\|u\|_{L^{2n/n-\gamma}}\|u\|_{H^{\beta}},
\end{equation}
for an arbitrary fixed $\psi\in L^{\infty}$ such that $|\psi(x)|\leq Me^{-\mu|x|},$ where $M,\mu\geq0, x\in\mathbb{R}^n.$
\end{Lem}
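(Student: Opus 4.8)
The plan is to reduce the claim to the inhomogeneous Leibniz estimate of Proposition \ref{prop2} by first passing the operator $(I-\Delta)^{\beta/2}$ through the convolution defining $K_\gamma$. By the definition of the $H^{\beta,p}$-norm, the left-hand side equals $\|(I-\Delta)^{\beta/2}(\psi_\gamma\ast|u|^2)\|_{L^{2n/\gamma}}$. Since $(I-\Delta)^{\beta/2}$ is a Fourier multiplier and convolution corresponds to multiplication on the Fourier side, I would establish the commutation identity $(I-\Delta)^{\beta/2}(\psi_\gamma\ast|u|^2)=\psi_\gamma\ast[(I-\Delta)^{\beta/2}|u|^2]$. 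The exponential bound $|\psi(x)|\le Me^{-\mu|x|}$ together with $\gamma<n$ guarantees $\psi_\gamma\in L^1(\mathbb{R}^n)$, so $\psi_\gamma$ is a well-defined tempered convolution kernel; I would justify the identity first for Schwartz data, where both sides are smooth and the multiplier computation is immediate, and then extend by density using that the $H^\beta$-norm controls all the quantities appearing below.

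Once the derivative sits on $|u|^2$, I would use the pointwise bound $|\psi_\gamma(x)|\le\|\psi\|_\infty|x|^{-\gamma}$ to dominate the convolution by $|\cdot|^{-\gamma}\ast|(I-\Delta)^{\beta/2}|u|^2|$ and then apply the Hardy--Littlewood--Sobolev inequality of Remark \ref{rmk-3} parts (iv)--(v). With the target exponent $q=2n/\gamma$ the admissibility relation $\tfrac1q=\tfrac1p-\tfrac{n-\gamma}{n}$ forces $p=2n/(2n-\gamma)$, which indeed satisfies $1<p<q<\infty$ since $0<\gamma<n$. This yields
\[
\|(I-\Delta)^{\beta/2}K_\gamma(|u|^2)\|_{L^{2n/\gamma}}\le C\|\psi\|_\infty\,\big\||u|^2\big\|_{H^{\beta,2n/(2n-\gamma)}}.
\]

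It remains to control $\big\||u|^2\big\|_{H^{\beta,2n/(2n-\gamma)}}$, and here I would invoke Proposition \ref{prop2} applied to the product $u\,\bar u$ with $r=2n/(2n-\gamma)$ and the conjugate pairs $(p_1,q_1)=(2,2n/(n-\gamma))$ and $(p_2,q_2)=(2n/(n-\gamma),2)$. One checks $\tfrac1{p_i}+\tfrac1{q_i}=\tfrac1r$ for $i=1,2$, and since $\|u\|_{H^{\beta,2}}=\|u\|_{H^{\beta}}$ while the Sobolev embedding (\ref{h2}) (valid because $\beta\ge\gamma/2$) gives $u\in L^{2n/(n-\gamma)}$, every norm on the right of (\ref{Leib1-15}) is finite. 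The two symmetric terms each equal $\|u\|_{H^{\beta}}\|u\|_{L^{2n/n-\gamma}}$ up to a constant, so $\big\||u|^2\big\|_{H^{\beta,2n/(2n-\gamma)}}\le C\|u\|_{L^{2n/n-\gamma}}\|u\|_{H^{\beta}}$. Combining this with the previous display completes the proof.

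The main obstacle I anticipate is the rigorous justification of the commutation step: $\psi_\gamma$ is only bounded and decaying rather than smooth, so moving $(I-\Delta)^{\beta/2}$ onto the $|u|^2$ factor must be argued through the multiplier identity on Schwartz functions followed by a density/continuity argument, taking care that the map $u\mapsto\psi_\gamma\ast[(I-\Delta)^{\beta/2}|u|^2]$ extends continuously from $\mathcal{S}$ to $H^{\beta}\cap L^{2n/n-\gamma}$ in the quadratic sense. The remaining steps are bookkeeping with the Hardy--Littlewood--Sobolev and Leibniz exponents.
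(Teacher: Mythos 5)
Your proposal reproduces the paper's proof essentially step for step: the same commutation identity $(I-\Delta)^{\beta/2}K_{\gamma}(|u|^2)=K_{\gamma}\bigl((I-\Delta)^{\beta/2}(|u|^2)\bigr)$ established on Schwartz functions via the Fourier exchange formula and extended by density, the same Hardy--Littlewood--Sobolev exponents $q=2n/\gamma$, $p=2n/(2n-\gamma)$ from Remark \ref{rmk-3} (v), and the same application of Proposition \ref{prop2} with $p_1=q_2=2$, $q_1=p_2=2n/(n-\gamma)$ and the embedding (\ref{h2}). The only cosmetic differences are that the paper justifies the exchange step by placing $\psi_{\gamma}$ in $\mathcal{O}_C'$ rather than in $L^1$ (both justifications tacitly need $\mu>0$, even though the statement allows $\mu=0$), and that your factorization $|u|^2=u\bar{u}$ is in fact slightly cleaner than the paper's choice $|u|\cdot|u|$, since it avoids any question about $|u|\in H^{\beta}$.
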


\begin{proof} We recall that $(I-\Delta)^{\beta/2}u={\mathcal{F}}^{-1}[(1+|\xi|^2)^{\beta/2}\widehat{u}(\xi)],$ belongs to $\mathcal{S}$\, for $u\in
{\mathcal{S}}.$
We claim that,
\begin{equation}\label{K1}
((I-\Delta)^{\beta/2}K_{\gamma})(|u|^2)=(K_{\gamma}(I-\Delta)^{\beta/2})(|u|^2),\,\,\,\text{for}\,\, u\in {\mathcal{S}}.
\end{equation}
Since,  $|\psi(x)|\leq Me^{-\mu|x|},$ decreases faster than any power of $1/|x|^{r}$ for each integers $r>0,$ by hypothesis. Then
$\psi_{\gamma}=\displaystyle\frac{\psi(\cdot)}{|\cdot|^{\gamma}}$ belongs to $\mathcal{O}_{C}'\subseteq \mathcal{S}',$ see e.g. \cite[Definition 30.1, page 315]{TR2016} for definition and properties of $\mathcal{O}_{C}'$. Moreover, if $u\in {\mathcal{S}}$ is a given function, then $|u|^2=u\overline{u}$ also defines an element of the Schwartz space. Therefore the convolution product
\begin{equation}\label{com}
K_{\gamma}(|u|^2)= (\psi_{\gamma}\ast|u|^2)
\end{equation}
exists as a tempered distribution. Hence,
\begin{equation}\label{com1}
{\cal{F}}(K_{\gamma}(|u|^2))=\mathcal{F}(\psi_{\gamma}\ast|u|^2)=
\mathcal{F}(\psi_{\gamma})\mathcal{F}(|u|^2).
\end{equation}
Furthermore, $(1+|\xi|^2)^{\beta/2}{\mathcal{F}}(|u|^2)\in \mathcal{S}$. Thus,

\begin{equation}\label{conmu}
\begin{split}
((I-\Delta)^{\beta/2}K_{\gamma})(|u|^2)&={\cal{F}}^{-1}[(1+|\xi|^2)^{\beta/2}{\cal{F}}(K_{\gamma}(|u|^2))] \\
&={\cal{F}}^{-1}\left[{\cal
{F}}(\psi_{\gamma})(1+|\xi|^2)^{\beta/2}{\cal{F}}(|u|^2)\right]\\
&=\left(\psi_{\gamma}\ast {\cal{F}}^{-1}[(1+|\xi|^2)^{\beta/2}{\cal{F}}(|u|^2)]\right)\\
&=(K_{\gamma}(I-\Delta)^{\beta/2})(|u|^2),\quad u\in \mathcal{S}.
\end{split}
\end{equation}
Since $\overline{\mathcal{S}}^{\|\cdot\|_{\beta}}=H^{\beta},$ $\beta>0.$ Then (\ref{conmu}) holds on $H^{\beta}.$ But then (\ref{conmu}) together with Remark \ref{rmk-3} (v) in which we choose $q=2n/\gamma,$ $p=2n/2n-\gamma$  implies that,
\begin{equation}\label{kgdes}
\begin{split}
\|K_{\gamma}(|u|^2)\|_{H^{\beta, 2n/\gamma}}&=\|(I-\Delta)^{\beta/2}K_{\gamma}(|u|^2))\|_{L^{2n/\gamma}}\\
& = \|K_{\gamma}(I-\Delta)^{\beta/2}(|u|^2))\|_{L^{2n/\gamma}}\\
&\leq C \|\psi\|_{\infty}\|(I-\Delta)^{\beta/2}(|u|^2)\|_{L^{2n/2n-\gamma}},\\
\end{split}
\end{equation}
Next, we estimate the right hand side of (\ref{kgdes}) by applying Fractional Leibniz rule (\ref{Leib1-15}). First we prove the following inequality
\begin{equation}\label{Kineq1}
\|(I-\Delta)^{\beta/2}(|u|^2)\|_{L^{2n/2n-\gamma}}\leq C\|u\|_{L^{2n/n-\gamma}}\|u\|_{H^{\beta}},
\end{equation}
In fact, from Proposition \ref{prop2}, in which  we now choose the parameters as follows:  $\sigma=\beta,$ $r=2n/2n-\gamma,$ $p_1=q_2=2,$  $q_1=p_2=2n/n-\gamma,$ and $u=v=|u|,$ where $u\in H^{\beta}\cap L^{2n/n-\gamma}.$ Then $|u|^2\in H^{\beta,2n/2n-\gamma}$ and
\begin{equation}\label{des-sob2}
\begin{split}
\|(I-\Delta)^{\beta/2}(|u|^2)\|_{L^{2n/2n-\gamma}}
&\leq C\big(\|(I-\Delta)^{\beta/2}|u|\|_{L^2}\|u\|_{L^{2n/n-\gamma}}+\|u\|_{L^{2n/n-\gamma}}\|(I-\Delta)^{\beta/2}|u|\|_{L^2}\big)\\
&=2C(\|u\|_{L^{2n/n-\gamma}}\|(I-\Delta)^{\beta/2}|u|\|_{L^2})\\
&=C(\|u\|_{L^{2n/n-\gamma}}\|u\|_{H^{\beta}}).\quad u\in H^{\beta}.
\end{split}
\end{equation}
Therefore from (\ref{kgdes}) and (\ref{des-sob2}) the proof follows.
\end{proof}

\begin{Lem}\label{lemx} Let $\gamma\in(0,n),$ $\beta\geq\gamma/2$. Then for each $u,v\in H^{\beta},$ there exists a positive constant $C$ such that
\begin{equation}\label{Estdif}
\||u|^2-|v|^2\|_{L^{2n/2n-\gamma}}\leq C(\|u\|_{H^{\beta}}+\|v\|_{H^{\beta}})\|u-v\|_{L^2}.
\end{equation}
\end{Lem}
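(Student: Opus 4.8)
The plan is to reduce the bound to a pointwise algebraic inequality, followed by a single application of Hölder's inequality and the Sobolev embedding already recorded in (\ref{h2}). This is really the Lipschitz-type companion of Lemma \ref{lem2}, obtained by the same splitting of exponents but with the difference $u-v$ placed in $L^2$.

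First I would exploit the factorization
$$|u|^2-|v|^2=(u-v)\overline{u}+v\,\overline{(u-v)},$$
which one checks directly by expanding $u\overline{u}-v\overline{v}$. Taking absolute values gives the pointwise estimate
$$\big||u|^2-|v|^2\big|\le |u-v|\,\big(|u|+|v|\big).$$
Next I would apply Hölder's inequality to the product on the right with the conjugate exponents $2$ and $2n/(n-\gamma)$. The arithmetic to verify is that
$$\frac{1}{2}+\frac{n-\gamma}{2n}=\frac{2n-\gamma}{2n},$$
so these two exponents are admissible for the target space $L^{2n/(2n-\gamma)}$, yielding
$$\big\||u|^2-|v|^2\big\|_{L^{2n/(2n-\gamma)}}\le \|u-v\|_{L^2}\,\big\||u|+|v|\big\|_{L^{2n/(n-\gamma)}}.$$

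Finally I would invoke the chain of embeddings (\ref{h2}), namely $H^{\beta}\hookrightarrow H^{\gamma/2}\hookrightarrow L^{2n/(n-\gamma)}$, which is valid precisely because $\beta\ge\gamma/2$ together with $0<\gamma<n$. Applying this to $u$ and to $v$ separately and using the triangle inequality in $L^{2n/(n-\gamma)}$ bounds $\||u|+|v|\|_{L^{2n/(n-\gamma)}}$ by $C\big(\|u\|_{H^{\beta}}+\|v\|_{H^{\beta}}\big)$, which gives (\ref{Estdif}).

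I do not expect any genuine obstacle here: the only points requiring care are the bookkeeping of the Hölder exponents and confirming that the hypothesis $\beta\ge\gamma/2$ is exactly what makes the embedding into $L^{2n/(n-\gamma)}$ available, so that the factor $|u|+|v|$ can be controlled in the $H^{\beta}$ norm while the factor $u-v$ stays in $L^2$.
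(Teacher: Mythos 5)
Your proposal is correct and follows essentially the same route as the paper: the pointwise bound $\bigl||u|^2-|v|^2\bigr|\le(|u|+|v|)\,|u-v|$, H\"older's inequality with the exponent split $\frac{1}{2}+\frac{n-\gamma}{2n}=\frac{2n-\gamma}{2n}$, and the embedding (\ref{h2}) $H^{\beta}\hookrightarrow L^{2n/(n-\gamma)}$ guaranteed by $\beta\ge\gamma/2$. The only cosmetic difference is that you justify the pointwise inequality via the explicit factorization $|u|^2-|v|^2=(u-v)\overline{u}+v\overline{(u-v)}$, which the paper takes for granted.
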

\begin{proof} Let $u,$ $v$  be in $H^{\beta}.$ Then,
 $u, v\in L^{2n/n-\gamma},$ since $\beta\geq\gamma/2$. Moreover, $(|u|+|v|)|u-v|$ belongs to $L^{2n/2n-\gamma}.$ Thus, by H\"{o}lder's inequality and Sobolev inclusion,  it follows that
\begin{equation*}
\begin{split}
\||u|^2-|v|^2\|_{L^{2n/2n-\gamma}}
&\leq \|(|u|+|v|)|u-v|\|_{L^{2n/2n-\gamma}} \\
&\leq \||u|+|v|\|_{L^{2n/n-\gamma}}\|u-v\|_{L^2}\\
&\leq(\|u\|_{L^{2n/n-\gamma}}+\|v\|_{L^{2n/n-\gamma}})\|u-v\|_{L^2}\\
&\leq C(\|u\|_{H^{\beta}}+\|v\|_{H^{\beta}})\|u-v\|_{L^2},
\end{split}
\end{equation*}
where $C>0.$
\end{proof}


\begin{Lem}\label{lem4} For $\gamma\in(0,n),$ $\beta\geq\gamma/2,$ $n\geq1$ and $\psi\in L^{\infty}.$ Then there exists a positive constant $C$ such that the map $u\mapsto K_{\gamma}(|u|^2)u$ from $L^2$ to $L^2$ satisfies
\begin{equation*}
\|K_{\gamma}(|u|^2)u-K_{\gamma}(|v|^2)v\|_{L^2}\leq C(\|u\|^2_{H^{\beta}}+\|v\|^2_{H^{\beta}}+\|u\|_{H^{\beta}}\|v\|_{H^{\beta}})\|u-v\|_{L^2},
\end{equation*}
for $u,v\in H^{\beta}$.
\end{Lem}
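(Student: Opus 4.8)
The plan is to reduce everything to the lemmas already established by splitting the difference through the telescoping identity
\[
K_{\gamma}(|u|^2)u-K_{\gamma}(|v|^2)v = K_{\gamma}(|u|^2)(u-v) + K_{\gamma}(|u|^2-|v|^2)\,v,
\]
which is legitimate because $K_{\gamma}$ is a convolution operator, hence linear, so $K_{\gamma}(|u|^2)-K_{\gamma}(|v|^2)=K_{\gamma}(|u|^2-|v|^2)$. By the triangle inequality it then suffices to estimate the two terms on the right separately in $L^2$.

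For the first term I would pull out the $L^{\infty}$ norm of the potential,
\[
\|K_{\gamma}(|u|^2)(u-v)\|_{L^2}\leq \|K_{\gamma}(|u|^2)\|_{L^{\infty}}\|u-v\|_{L^2},
\]
and then invoke Lemma \ref{lem1} to get $\|K_{\gamma}(|u|^2)\|_{L^{\infty}}\leq C\|u\|^2_{\dot{H}^{\gamma/2}}$. Since $\gamma/2\leq\beta$, Theorem \ref{teo-fund} (with $p=2$) together with the embedding $H^{\beta}\hookrightarrow H^{\gamma/2}$ from Remark \ref{rmk-2}(i) yields $\|u\|_{\dot{H}^{\gamma/2}}\leq C\|u\|_{H^{\beta}}$, so the first term is controlled by $C\|u\|^2_{H^{\beta}}\|u-v\|_{L^2}$.

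The second term is where the real work sits. I would apply H\"{o}lder's inequality with the conjugate pair $2n/\gamma$ and $2n/n-\gamma$ (note $\gamma/2n+(n-\gamma)/2n=1/2$) to write
\[
\|K_{\gamma}(|u|^2-|v|^2)\,v\|_{L^2}\leq \|K_{\gamma}(|u|^2-|v|^2)\|_{L^{2n/\gamma}}\,\|v\|_{L^{2n/n-\gamma}}.
\]
For the first factor I would use Remark \ref{rmk-3}(v) with $q=2n/\gamma$ and $p=2n/2n-\gamma$ (the admissibility $1/q=1/p-(n-\gamma)/n$ holds) to obtain $\|K_{\gamma}(|u|^2-|v|^2)\|_{L^{2n/\gamma}}\leq C\|\psi\|_{\infty}\,\||u|^2-|v|^2\|_{L^{2n/2n-\gamma}}$, and then Lemma \ref{lemx} bounds this by $C(\|u\|_{H^{\beta}}+\|v\|_{H^{\beta}})\|u-v\|_{L^2}$. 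For the second factor the embedding (\ref{h2}) gives $\|v\|_{L^{2n/n-\gamma}}\leq C\|v\|_{H^{\beta}}$. Multiplying, the second term is bounded by $C(\|u\|_{H^{\beta}}\|v\|_{H^{\beta}}+\|v\|^2_{H^{\beta}})\|u-v\|_{L^2}$.

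Adding the two estimates and absorbing constants gives exactly the claimed bound. The one point demanding care, and the main obstacle, is the bookkeeping of Lebesgue exponents in the second term: one must check simultaneously that the H\"{o}lder pair closes to $L^2$ and that the triple $(p,q,\gamma)$ meets the Hardy--Littlewood--Sobolev admissibility condition underlying Remark \ref{rmk-3}(iv)--(v), so that the chain Remark \ref{rmk-3}(v) $\to$ Lemma \ref{lemx} $\to$ embedding (\ref{h2}) applies. Once the exponents are pinned down the remaining estimates are routine.
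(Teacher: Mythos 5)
Your proposal is correct and follows essentially the same route as the paper: the same telescoping decomposition, the first term handled via the $L^{\infty}$ bound $\|K_{\gamma}(|u|^2)\|_{L^{\infty}}\leq C\|u\|^2_{\dot{H}^{\gamma/2}}$ (the paper invokes Lemma \ref{des-gam} directly, you invoke its corollary Lemma \ref{lem1}, which is the same content), and the second term handled by H\"{o}lder with the pair $(2n/\gamma,\,2n/(n-\gamma))$, Remark \ref{rmk-3}(v), Lemma \ref{lemx}, and the embedding (\ref{h2}). Your exponent bookkeeping checks out, so there is nothing to correct.
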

\begin{proof} From the definition of $K_{\gamma}(|u|^2)$ given in (\ref{htree}) we have that,

\begin{align}\label{Our1}
\|K_{\gamma}(|u|^2)u-K_{\gamma}(|v|^2)v\|_{L^2}
&=\|K_{\gamma}(|u|^2)(u-v)+K_{\gamma}(|u|^2-|v|^2)v\|_{L^2}               \notag \\
&\leq\|K_{\gamma}(|u|^2)(u-v)\|_{L^2}+\|K_{\gamma}(|u|^2-|v|^2)v\|_{L^2}.
\end{align}
Next, applying  Lemma \ref{des-gam} for $\gamma\in(0,n)$ together with Theorem \ref{teo-fund} for $\beta=\gamma/2, p=2$ and $u\in H^{\gamma/2},$ we have that
\begin{equation*}
\begin{split}
\|K_{\gamma}(|u|^2)(u-v)\|^2_{L^2}
&=\int_{\mathbb{R}^n}|K_{\gamma}(|u|^2)(x)|^2|(u-v)(x)|^2dx \\
&\leq \sup_{x\in\mathbb{R}^n} |K_{\gamma}(|u|^2)(x)|^2\int_{\mathbb{R}^n}|(u-v)(x)|^2dx\\
&\leq C\|\psi\|_{L^{\infty}}^2\|u\|^4_{\dot{H}^{\gamma/2}}\int_{\mathbb{R}^n}|(u-v)(x)|^2dx\\
&\leq C \|u\|_{H^{\gamma/2}}^4\|u-v\|^2_{L^2}.
\end{split}
\end{equation*}
Thus from (\ref{h2}) we get for $\beta\geq \gamma/2$, that
\begin{equation}\label{do-1}
\|K_{\gamma}(|u|^2)(u-v)\|_{L^2}\leq C\|u\|^2_{H^{\gamma/2}}\|u-v\|_{L^2}\leq C\|u\|_{H^{\beta}}^2\|u-v\|_{L^2}.
\end{equation}
On the other hand, since $K_{\gamma}(|u|^2-|v|^2)\in L^{2n/\gamma},$ for $u, v\in H^{\beta}\hookrightarrow L^{2n/n-\gamma}$ by Lemma \ref{lem2} and the embedding (\ref{h2}). Moreover, we have that $|u|^2-|v|^2\in L^{2n/2n-\gamma}$ by Lemma \ref{lemx}.
Thus the second  summand on right hand side of (\ref{Our1}) satisfy

\begin{equation}\label{do-2}
\begin{split}
\|K_{\gamma}(|u|^2-|v|^2)v\|_{L^2}&\leq \|K_{\gamma}(|u|^2-|v|^2)\|_{L^{2n/\gamma}}\|v\|_{L^{2n/n-\gamma}} \\
&\leq \|v\|_{H^{\beta}}\|K_{\gamma}(|u|^2-|v|^2)\|_{L^{2n/\gamma}} \\
&\leq \|v\|_{H^{\beta}}\cdot\||u|^2-|v|^2\|_{L^{2n/2n-\gamma}}\\
&\leq \|v\|_{H^{\beta}}(\|u\|_{H^{\beta}}+\|v\|_{H^{\beta}})\|u-v\|_{L^2}.
\end{split}
\end{equation}
Thus, combining the estimate (\ref{Our1}), (\ref{do-1}), and (\ref{do-2}) we get that

\begin{equation*}
\begin{split}
\|K_{\gamma}(|u|^2)u-K_{\gamma}(|v|^2)v\|_{L^2}
&\leq \|K_{\gamma}(|u|^2)(u-v)\|_{L^2}+\|K_{\gamma}(|u|^2-|v|^2)v\|_{L^2}\\
&\leq C(\|u\|^2_{H^{\beta}}+\|v\|^2_{H^{\beta}}+\|u\|_{H^{\beta}}
\|v\|_{H^{\beta}})\|u-v\|_{L^2},
\end{split}
\end{equation*}
and the proof of Lemma \ref{lem4} is now complete.
\end{proof}

We next show that the nonlinear function $u\mapsto K_{\gamma}(|u|^2)u$ is Lipschitz continuous from the closed ball in $H^{\beta}$ into itself. To show this we state the following lemma.


\begin{Lem}\label{Lips0} For $\gamma/2<\beta<1,n\geq2$ and $\psi\in L^{\infty}.$ Then there exists a positive constant $C$ such that the map $u\mapsto K_{\gamma}(|u|^2)u,$ satisfies the following estimate on $H^{\beta},$
\begin{equation*}
\|K_{\gamma}(|u|^2)u-K_{\gamma}(|v|^2)v\|_{H^{\beta}}\leq C(\|u\|^2_{H^{\beta}}+\|u\|_{H^{\beta}}+\|v\|^2_{H^{\beta}}+\|u\|_{H^{\beta}}\|v\|_{H^{\beta}})\cdot\|u-v\|_{H^{\beta}},
\end{equation*}
for each $u,v\in H^{\beta}$.
\end{Lem}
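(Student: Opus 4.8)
The plan is to estimate the $H^\beta=H^{\beta,2}$ norm of the difference through the same splitting used in Lemma \ref{lem4},
\[
K_\gamma(|u|^2)u - K_\gamma(|v|^2)v = K_\gamma(|u|^2)(u-v) + K_\gamma(|u|^2-|v|^2)\,v ,
\]
but now measuring each summand in $H^\beta$ instead of $L^2$ and controlling the two products by the fractional Leibniz rule of Proposition \ref{prop2} with $\sigma=\beta$, $r=2$ and the two admissible H\"older pairs $(p_1,q_1)=(2n/\gamma,\,2n/n-\gamma)$ and $(p_2,q_2)=(\infty,\,2)$, both of which satisfy $\frac1{p_i}+\frac1{q_i}=\frac12$. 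As in Lemma \ref{lem3} I would first argue for $u,v\in\mathcal S$, so that the commutation identity and the Leibniz rule apply, and then pass to $H^\beta$ by density.

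For the first summand, applying Proposition \ref{prop2} to $K_\gamma(|u|^2)\cdot(u-v)$ gives
\[
\|K_\gamma(|u|^2)(u-v)\|_{H^\beta}\le C\big(\|K_\gamma(|u|^2)\|_{H^{\beta,2n/\gamma}}\|u-v\|_{L^{2n/n-\gamma}}+\|K_\gamma(|u|^2)\|_{L^\infty}\|u-v\|_{H^\beta}\big).
\]
Here $\|K_\gamma(|u|^2)\|_{H^{\beta,2n/\gamma}}$ is controlled by Lemma \ref{lem3}, the factor $\|K_\gamma(|u|^2)\|_{L^\infty}$ by Lemma \ref{lem1} together with Theorem \ref{teo-fund} and the embedding (\ref{h2}) (using $\beta\ge\gamma/2$), and $\|u-v\|_{L^{2n/n-\gamma}}\le C\|u-v\|_{H^\beta}$ again by (\ref{h2}); this bounds the first summand by $C\|u\|_{H^\beta}^2\|u-v\|_{H^\beta}$.

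The heart of the argument is the second summand, where the derivative may fall on $K_\gamma(|u|^2-|v|^2)$. The key device I would use is the purely algebraic identity $|u|^2-|v|^2=u\,\overline{(u-v)}+\overline v\,(u-v)$, which rewrites the difference as a sum of products of genuine $H^\beta$ functions and thereby avoids any question about the modulus map $u\mapsto|u|$ on $H^\beta$. Applying Proposition \ref{prop2} to each of these two products (with the pair $(2,\,2n/n-\gamma)$, and using that conjugation is an isometry of the relevant norms) yields
\[
\|(I-\Delta)^{\beta/2}(|u|^2-|v|^2)\|_{L^{2n/2n-\gamma}}\le C(\|u\|_{H^\beta}+\|v\|_{H^\beta})\|u-v\|_{H^\beta}.\quad(\star)
\]
Since $K_\gamma$ commutes with $(I-\Delta)^{\beta/2}$ (identity (\ref{conmu}) of Lemma \ref{lem3}, applied to $|u|^2-|v|^2$), Remark \ref{rmk-3}(v) with $q=2n/\gamma$, $p=2n/2n-\gamma$ converts $(\star)$ into
\[
\|K_\gamma(|u|^2-|v|^2)\|_{H^{\beta,2n/\gamma}}\le C(\|u\|_{H^\beta}+\|v\|_{H^\beta})\|u-v\|_{H^\beta}.
\]
For the companion $L^\infty$ factor I would, using the same identity, dominate $|K_\gamma(|u|^2-|v|^2)(x)|$ by $\|\psi\|_\infty$ times integrals of the form $\int_{\mathbb R^n}|u(y)|\,|(u-v)(y)|\,|x-y|^{-\gamma}\,dy$, split each weight as $|x-y|^{-\gamma/2}|x-y|^{-\gamma/2}$, apply Cauchy--Schwarz and Lemma \ref{des-gam} (followed by Theorem \ref{teo-fund} and (\ref{h2})), and conclude $\|K_\gamma(|u|^2-|v|^2)\|_{L^\infty}\le C(\|u\|_{H^\beta}+\|v\|_{H^\beta})\|u-v\|_{H^\beta}$. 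Feeding these two bounds into the Leibniz estimate for $K_\gamma(|u|^2-|v|^2)\cdot v$ (with $\|v\|_{L^{2n/n-\gamma}}\le C\|v\|_{H^\beta}$) bounds the second summand by $C(\|u\|_{H^\beta}+\|v\|_{H^\beta})\|v\|_{H^\beta}\|u-v\|_{H^\beta}$.

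Combining the two summands gives $\|K_\gamma(|u|^2)u-K_\gamma(|v|^2)v\|_{H^\beta}\le C(\|u\|_{H^\beta}^2+\|v\|_{H^\beta}^2+\|u\|_{H^\beta}\|v\|_{H^\beta})\|u-v\|_{H^\beta}$, which is stronger than the asserted estimate, since the extra additive $\|u\|_{H^\beta}$ on the right only enlarges the bound. I expect the main obstacle to be exactly step $(\star)$: the tempting factorization $(|u|-|v|)(|u|+|v|)$ forces one to differentiate $|u|-|v|$ and thus to face the Lipschitz continuity of the modulus in the $H^\beta$ seminorm; the bilinear identity $|u|^2-|v|^2=u\,\overline{(u-v)}+\overline v\,(u-v)$ is what keeps every factor a bona fide $H^\beta$ function and lets Proposition \ref{prop2} apply directly.
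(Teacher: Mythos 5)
Your proof is correct, and while it follows the paper's skeleton — the same splitting $K_\gamma(|u|^2)u-K_\gamma(|v|^2)v=K_\gamma(|u|^2)(u-v)+K_\gamma(|u|^2-|v|^2)v$ as in (\ref{prev}), the same exponent pairs $(2n/\gamma,\,2n/(n-\gamma))$ and $(\infty,2)$ in the Leibniz rule, and the same inputs (Lemmas \ref{lem1}, \ref{des-gam}, \ref{lem3} with its commutation identity (\ref{conmu})) — it genuinely diverges at the two places that matter. First, you estimate directly in the inhomogeneous norm via Proposition \ref{prop2}, whereas the paper splits $\|\cdot\|_{H^\beta}$ into the $L^2$ part (handled by Lemma \ref{lem4}) plus the homogeneous seminorm (Proposition \ref{prop1}) using the norm equivalence of Remark \ref{rmk-3}(i); that equivalence is the sole reason for the hypothesis $\beta<1$, so your route works for all $\beta\ge\gamma/2$ and also yields a cleaner constant, without the stray linear term $\|u\|_{H^\beta}$ in the final bound. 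Second, and more substantively, your bilinear identity $|u|^2-|v|^2=u\,\overline{(u-v)}+\overline v\,(u-v)$ replaces the paper's treatment in (\ref{estJ1})--(\ref{est02}), which implicitly rests on the factorization $(|u|+|v|)(|u|-|v|)$: applied honestly, the Leibniz rule there produces the factors $\||u|-|v|\|_{H^\beta}$ and $\||u|+|v|\|_{H^\beta}$, and the paper silently replaces the former by $\|u-v\|_{H^\beta}$ and bounds the latter by $\|u\|_{H^\beta}+\|v\|_{H^\beta}$ — both statements about the modulus map $u\mapsto|u|$ on $H^\beta$ that the Fourier-multiplier definition of the norm does not justify (the pointwise bound $\bigl||u|-|v|\bigr|\le|u-v|$ controls $L^p$ norms, not fractional Sobolev norms). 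Your identity keeps every factor a genuine $H^\beta$ function, so Proposition \ref{prop2} applies verbatim; this is a repair of a real soft spot, not merely a stylistic variant. Likewise, your explicit Cauchy--Schwarz argument for $\|K_\gamma(|u|^2-|v|^2)\|_{L^\infty}$ (splitting $|x-y|^{-\gamma}=|x-y|^{-\gamma/2}|x-y|^{-\gamma/2}$ and invoking Lemma \ref{des-gam}) fills in the step the paper dispatches with ``applying the same reasoning above'' in (\ref{desig-J2}), where Lemma \ref{lem1} does not apply verbatim since $|u|^2-|v|^2$ is not a square modulus. One shared caveat: like the paper's own $J_1$ estimate, your use of the commutation identity (\ref{conmu}) for $|u|^2-|v|^2$ inherits the decay hypothesis on $\psi$ from Lemma \ref{lem3}, which the statement of the present lemma (with bare $\psi\in L^\infty$) elides — a blemish of the source, not of your argument.
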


\begin{proof} First we show that there exists a positive constant $C$ such that,
\begin{equation}\label{estim-B}
\|K_{\gamma}(|u|^2)u-K_{\gamma}(|v|^2)v\|_{\dot{H}^{\beta}}
\leq C\left(2\|u\|_{H^{\beta}}+2\|v\|_{H^{\beta}}(\|u\|_{H^{\beta}}+\|v\|_{H^{\beta}})\right)\cdot\|u-v\|_{H^{\beta}},
\end{equation}
for each $u,v\in H^{\beta}.$ Indeed, it follows from the definition of the convolution operator (\ref{htree}) that
\begin{equation}\label{prev}
K_{\gamma}(|u|^2)u-K_{\gamma}(|v|^2)v=K_{\gamma}(|u|^2)(u-v)+K_{\gamma}(|u|^2-|v|^2)v.
\end{equation}
Then proceed to estimate $K_{\gamma}(|u|^2)u-K_{\gamma}(|v|^2)v$ on the space $\dot{H}^{\beta}$ for $u,v\in H^{\beta}.$ In fact, from identity (\ref{prev}) we have that,
\begin{equation}\label{eqsep}
\begin{split}
\|(-\Delta)^{\beta/2}(K_{\gamma}(|u|^2)u-K_{\gamma}(|v|^2)v)\|_{L^2}
&\leq\|(-\Delta)^{\beta/2}K_{\gamma}(|u|^2)(u-v)\|_{L^2}\\
&+\|(-\Delta)^{\beta/2}K_{\gamma}(|u|^2-|v|^2)v\|_{L^2}.
\end{split}
\end{equation}
Thus, it is sufficient to obtain bounds for the following two quantities

\begin{equation*}
I:=\|(-\Delta)^{\beta/2}K_{\gamma}(|u|^2)(u-v)\|_{L^2},\ \  \mbox{and}\ \ J:=\|(-\Delta)^{\beta/2}K_{\gamma}(|u|^2-|v|^2)v\|_{L^2},\ u,v\in H^{\beta}.
\end{equation*}
For this purpose let us consider first the expression $I$. We notice that  $K_{\gamma}(|u|^2)$ belongs to $ H^{\beta,2n/\gamma}\cap L^{\infty},$ for $u\in H^{\beta},$ because of Theorem \ref{teo-fund}, Lemma \ref{lem1}, and Lemma \ref{lem3}. Moreover, if $u, v\in H^{\beta},$ then by the Proposition \ref{prop1} when  $\sigma=\beta,r=2,p_1=2n/\gamma,q_1=2n/n-\gamma,p_2=\infty,q_2=2,$ we obtain that
\begin{equation}\label{eqI}
\begin{split}
I&\leq \|K_{\gamma}(|u|^2)\|_{H^{\beta,2n/\gamma}}\|u-v\|_{L^{2n/n-\gamma}}
+\|K_{\gamma}(|u|^2)\|_{L^{\infty}}\|u-v\|_{H^{\beta}} \\
&=I_1+I_2.
\end{split}
\end{equation}
Now using the embedding (\ref{h2}) together with the Lemma \ref{lem3} we can estimate the first term of right side of (\ref{eqI}), that is,
\begin{equation}\label{desig-I1}
\begin{split}
I_1&=\|K_{\gamma}(|u|^2)\|_{H^{\beta,2n/\gamma}}\|u-v\|_{L^{2n/n-\gamma}}\\
&\leq C \|u\|_{L^{2n/n-\gamma}}\|u\|_{H^{\beta}}\|u-v\|_{H^{\beta}}\\
&\leq C \|u\|_{H^{\beta}}^2\|u-v\|_{H^{\beta}},\quad u,v\in H^{\beta}.
\end{split}
\end{equation}
Next, by the Lemma \ref{lem1}, Theorem \ref{teo-fund} for $p=2,$ $\beta=\gamma/2$ and the embedding (\ref{h2}), we can estimate the second term of (\ref{eqI}), that is,
\begin{equation}\label{desig-I2}
\begin{split}
I_2&=\|K_{\gamma}(|u|^2)\|_{L^{\infty}}\|u-v\|_{H^{\beta}}\\
&\leq C\|u\|_{\dot{H}^{\gamma/2}}^2\|u-v\|_{H^{\beta}} \\
&\leq C\|u\|_{H^{\gamma/2}}^2\|u-v\|_{H^{\beta}} \\
&\leq C\|u\|^2_{H^{\beta}}\|u-v\|_{H^{\beta}}, \quad u,v\in H^{\beta}.
\end{split}
\end{equation}

Next, it remains to obtain  estimates for $J=\|(-\Delta)^{\beta/2}K_{\gamma}(|u|^2-|v|^2)v\|_{L^2},$ for $u,v\in H^{\beta}.$ Once again we appeal to Proposition \ref{prop1} in the case that $\sigma=\beta,r=2,p_1=2n/\gamma,q_1=2n/n-\gamma,p_2=\infty,q_2=2.$ Hence we have that

\begin{equation}\label{eqJ}
\begin{split}
J&\leq\|K_{\gamma}(|u|^2-|v|^2)\|_{H^{\beta,2n/\gamma}}\|v\|_{L^{2n/n-\gamma}}
+\|K_{\gamma}(|u|^2-|v|^2)\|_{L^{\infty}}\|v\|_{H^{\beta}}\\
&=J_1+J_2.
\end{split}
\end{equation}
Thus by (\ref{h2}) and Remark \ref{rmk-3} (v) for $q=2n/\gamma,$ $p=2n/2n-\gamma$ together with Proposition \ref{prop2} for $\sigma=\beta,$ $r=2n/2n-\gamma,$ $p_1=2,q_1=2n/n-\gamma$ $p_2=2n/n-\gamma,q_2=2.$ Since $u,v,\in H^{\beta}\cap L^{2n/n-\gamma},$ it follows that

\begin{align}\label{estJ1}
J_1&=\|K_{\gamma}(|u|^2-|v|^2)\|_{H^{\beta,2n/\gamma}}\|v\|_{L^{2n/n-\gamma}}                     \notag \\
&\leq \|v\|_{H^{\beta}}\|K_{\gamma}(I-\Delta)^{\beta/2}(|u|^2-|v|^2)\|_{L^{2n/\gamma}}    \notag             \\
&\leq\|v\|_{H^{\beta}}\||u|^2-|v|^2\|_{H^{\beta,2n/2n-\gamma}}   \notag                                  \\
&\leq \|v\|_{H^{\beta}}\cdot\left(
\||u|+|v|\|_{H^{\beta}}\|u-v\|_{L^{2n/n-\gamma}}+\||u|+|v|\|_{L^{2n/n-\gamma}}\|u-v\|_{H^{\beta}}\right),
\end{align}

Since $H^{\beta}\hookrightarrow L^{2n/n-\gamma}$ we have

\begin{align}\label{est01}
\||u|+|v|\|_{H^{\beta}}\|u-v\|_{L^{2n/n-\gamma}}\leq(\|u\|_{H^{\beta}}+\|v\|_{H^{\beta}})\cdot\|u-v\|_{H^{\beta}},
\end{align}
and
\begin{equation}\label{est02}
\||u|+|v|\|_{L^{2n/n-\gamma}}\|u-v\|_{H^{\beta}}\leq(\|u\|_{H^{\beta}}+\|v\|_{H^{\beta}})\cdot\|u-v\|_{H^{\beta}}.
\end{equation}
Therefore, thanks to (\ref{estJ1})-(\ref{est02}) we obtain
\begin{equation}\label{desig-J1}
J_1\leq 2\|v\|_{H^{\beta}}(\|u\|_{H^{\beta}}+\|v\|_{H^{\beta}})\cdot\|u-v\|_{H^{\beta}},\ u,v\in H^{\beta}.
\end{equation}
Now, applying the same reasoning above we estimate $J_2.$ Thus,
\begin{equation}\label{desig-J2}
\begin{split}
J_2 &=\|K_{\gamma}(|u|^2-|v|^2)\|_{L^{\infty}}\|v\|_{H^{\beta}} \\
    &\leq C\|v\|_{H^{\beta}}(\|u\|_{H^{\beta}}+\|v\|_{H^{\beta}})\|u-v\|_{H^{\beta}}.
\end{split}
\end{equation}
Hence, the proof of assertion (\ref{estim-B}) follows from the estimates (\ref{eqI})-(\ref{desig-I2}), (\ref{eqJ}), (\ref{desig-J1})-(\ref{desig-J2}) together with the inequality (\ref{eqsep}).

On the other hand, due to the equivalence of norm provided by the Remark \ref{rmk-3} part (i) applied to $K_{\gamma}(|u|^2)u-K_{\gamma}(|v|^2)v$ under the condition $\beta\in (0,1),$ we have that
\begin{equation}\label{eqigua}
\begin{split}
\|K_{\gamma}(|u|^2)u-K_{\gamma}(|v|^2)v\|_{H^{\beta}}\leq\|K_{\gamma}(|u|^2)u-K_{\gamma}(|v|^2)v\|_{L^2} +\|(K_{\gamma}(|u|^2)u-K_{\gamma}(|v|^2)v)\|_{\dot{H}^{\beta}}.
\end{split}
\end{equation}
Thus, using the Lemma \ref{lem4} together with the estimate (\ref{estim-B}) and estimate (\ref{eqigua}) we get that
\begin{equation*}
\begin{split}
\|K_{\gamma}(|u|^2)u-K_{\gamma}(|v|^2)v\|_{H^{\beta}}
\leq C(\|u\|^2_{H^{\beta}}+\|u\|_{H^{\beta}}+\|v\|^2_{H^{\beta}}+\|u\|_{H^{\beta}}\|v\|_{H^{\beta}})\cdot\|u-v\|_{H^{\beta}},
\end{split}
\end{equation*}
for some positive constant $C$, and the proof of Lemma \ref{Lips0} is now complete.
\end{proof}

\section{ Non linear fractional Schr\"{o}dinger equation}\label{sec3}

In this section, we establish local existence in time for the fractional evolution problem
\begin{equation}\label{Hartree2}
\begin{cases}
i^{\alpha}D_t^{\alpha} u(t,x)=(-\Delta)^{\beta/2}u(t,x)+\lambda J_t^{1-\alpha} K_{\gamma}(|u|^2)(x)u(t,x), &\ \mbox{on}\ (0,T]\times\mathbb{R}^n \\
\quad\quad\ \ \ u(0,x)=u_0(x),\ u_0\in H^{\beta},     &
\end{cases}
\end{equation}
where $\alpha\in(0,1)$, $\beta>0$, $\lambda\in\mathbb{R}\backslash\{0\}.$ We consider in (\ref{Hartree2}), a Hartree type non-linearity, given by
\begin{equation*}
    K_{\gamma}(u)(x):=(|\cdot|^{-\gamma}\psi(\cdot)\ast u)(x),
\end{equation*}
where $\gamma\in(0,n),$ $\psi\in L^{\infty}.$  In this section we prove the existence and uniqueness of the solution for (\ref{Hartree2}). For this purpose, our main tool will be Banach's fixed point theorem and the results of the previous sections.

\subsection{Existence and local uniqueness}
In this section, we will prove the existence and uniqueness of solutions on  $C([0,T];H^{\beta})$ for equation (\ref{Hartree2}).

Hereafter we consider the $L^{\infty}$ norm on the space $C([0,T];H^{\beta}),$ that is,
\begin{equation}\label{notation}
\|u\|_{\infty}=\displaystyle\sup_{t\in [0,T]}\|u(t,\,\cdot)\|_{H^{\beta}}.
\end{equation}

We denote $u(t,\cdot)=u$ unless otherwise is specified. Furthermore, if $X$ is any of the function spaces under consideration, we simply write $u\in X$ whenever $u(t,\cdot)\in X$ for each $t\in [0,T].$

\begin{Def}[mild solution] Let $\alpha\in (0,1),$ $\beta>0$ be fixed. Assume that $u_0\in H^{\beta}.$ A function $u\in C([0,T];H^{\beta})$ is called a mild solution of (\ref{Hartree2}) if $u$ satisfies the integral equation
\begin{equation*}
\begin{split}
u(t,x)&=\mathcal{F}^{-1}[E_{\alpha}((-it)^{\alpha}|\xi|^{\beta})\hat{u}_0(\xi)](x) \\
&+\lambda(-i)^{\alpha}\int_0^t
\mathcal{F}^{-1}[E_{\alpha}((-i(t-s))^{\alpha}|\xi|^{\beta})\mathcal{F}(K_{\gamma}(|u|^2)u)(\xi)](x)ds
\end{split}
\end{equation*}
for each $t>0,x\in\mathbb{R}^n.$
\end{Def}

We are now ready to prove the main result of this paper.\\

\begin{Thm}\label{Thm1} Let $\alpha\in (0,1)$, $\gamma/2\leq\beta<1$, $n\geq2.$ Suppose that $u_0\in H^{\beta},$ $\psi\in L^{\infty}$ with  $|\psi(x)|\leq M\,e^{-\mu|x|},$ $M,\mu\geq0,x\in\mathbb{R}^n.$
Then, there exists $T>0,$ such that the nonlinear equation (\ref{Hartree2}) has  unique mild solution $u\in C([0,T];H^{\beta}),$ such that
\begin{equation}\label{desig-I}
    \|u\|_{\infty}\leq C\|u_0\|_{H^{\beta}},
\end{equation}
for some positive constant $C.$ Moreover, the map
\begin{equation*}
    \mathbb{F}:H^{\beta}\rightarrow C([0,T];H^{\beta}),\quad u_0\mapsto u(t,\cdot)
\end{equation*}
is continuous.
\end{Thm}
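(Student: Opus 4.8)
The plan is to recast the mild-solution identity as a fixed-point problem and apply Banach's theorem. Write $S(t)w := \mathcal{F}^{-1}[E_\alpha((-it)^\alpha|\xi|^\beta)\hat{w}(\xi)]$ for the linear propagator and $N(u):=K_\gamma(|u|^2)u$ for the nonlinearity, so that the mild formulation reads $\Phi(u)(t)=S(t)u_0+\lambda(-i)^\alpha\int_0^t S(t-s)N(u)(s)\,ds$. The first observation I would record is that $S(t)$ is bounded on $H^\beta$ with $\|S(t)\|_{H^\beta\to H^\beta}\le M_0$ uniformly in $t\ge0$: by Plancherel, $\|S(t)w\|_{H^\beta}^2=\int(1+|\xi|^2)^\beta|E_\alpha((-it)^\alpha|\xi|^\beta)|^2|\hat w(\xi)|^2\,d\xi$, and the kernel factor is controlled by the Mittag-Leffler estimate (\ref{Est-Mit}). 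I would then fix $R:=2M_0\|u_0\|_{H^\beta}$ and work on the closed ball $B_R=\{u\in C([0,T];H^\beta):\|u\|_\infty\le R\}$, a complete metric space under the metric induced by (\ref{notation}).

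Second, I would verify the self-map and contraction properties, which is where the nonlinear estimates of the previous section do all the work. Bounding the Duhamel term by $|\lambda|M_0\int_0^t\|N(u)(s)\|_{H^\beta}\,ds\le|\lambda|M_0 T\sup_{s}\|N(u)(s)\|_{H^\beta}$ and invoking Lemma~\ref{Lips0} with $v=0$ gives, for $u\in B_R$, an estimate $\|\Phi(u)\|_\infty\le M_0\|u_0\|_{H^\beta}+|\lambda|M_0TC(R^2+R)R$. Likewise, for $u,v\in B_R$, the Lipschitz bound of Lemma~\ref{Lips0} yields $\|\Phi(u)-\Phi(v)\|_\infty\le|\lambda|M_0TC(2R^2+R)\|u-v\|_\infty$. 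The contributions coming from the integral both carry an explicit factor $T$; hence, after fixing $R$ as above, I would choose $T>0$ small enough that the nonlinear part of the self-map bound is at most $R/2$ and the contraction constant is at most $1/2$. This makes $\Phi$ a contraction of $B_R$ into itself, and Banach's fixed-point theorem produces the unique mild solution in $B_R$.

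Third, and this is the step I expect to be the main technical obstacle, I must check that $\Phi(u)$ actually belongs to $C([0,T];H^\beta)$, i.e. strong continuity in $t$. For the linear term, continuity of $t\mapsto S(t)u_0$ in $H^\beta$ follows by dominated convergence: since $E_\alpha$ is entire, $E_\alpha((-it)^\alpha|\xi|^\beta)\to E_\alpha((-it_0)^\alpha|\xi|^\beta)$ pointwise in $\xi$ as $t\to t_0$, while the difference is dominated by $2M_0$ and $(1+|\xi|^2)^\beta|\hat u_0|^2$ is integrable. For the Duhamel term one splits the increment into the change of the upper limit of integration and the change of the kernel $S(t-s)$, estimating the first piece by $|\lambda|M_0\,|t-t_0|\,\sup_s\|N(u)(s)\|_{H^\beta}$ and the second again by dominated convergence together with the uniform bound on $\|N(u)(s)\|_{H^\beta}$ afforded by Lemma~\ref{Lips0}. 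Care is needed here because the Mittag-Leffler kernel is not a semigroup, so the usual strong-continuity shortcuts are unavailable and the argument must be carried out directly on the Fourier side.

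Finally, the a priori bound is immediate from the construction: since $u=\Phi(u)$, the self-map estimate gives $\|u\|_\infty\le R=2M_0\|u_0\|_{H^\beta}$, which is (\ref{desig-I}) with $C=2M_0$. For the continuity of the data-to-solution map, I would take two data $u_0,w_0$ with corresponding solutions $u,w$ lying (for $w_0$ close to $u_0$) in a common ball, subtract the two integral identities, and use the bound on $S(t)$ together with the Lipschitz estimate of Lemma~\ref{Lips0} to obtain $\|u-w\|_\infty\le M_0\|u_0-w_0\|_{H^\beta}+\tfrac12\|u-w\|_\infty$; absorbing the last term yields $\|u-w\|_\infty\le 2M_0\|u_0-w_0\|_{H^\beta}$, so that $\mathbb{F}$ is in fact Lipschitz, hence continuous.
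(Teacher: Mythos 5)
Your proposal is correct and follows essentially the same route as the paper: a Banach fixed-point argument for the Duhamel operator on a closed ball of $C([0,T];H^{\beta})$, with the Mittag-Leffler bound (\ref{Est-Mit}) controlling the propagator, Lemma \ref{Lips0} supplying the self-map and contraction estimates for small $T$, and continuous dependence obtained by subtracting the two integral identities and absorbing the Duhamel contribution. The only differences are cosmetic: you obtain the self-map bound by applying Lemma \ref{Lips0} at $v=0$ where the paper derives the cubic estimate (\ref{K-3}) separately, and you spell out the strong $t$-continuity of $\Phi(u)$ (via dominated convergence on the Fourier side), which the paper merely asserts.
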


\begin{proof} Let us fix $T>0,$ and  choose  $ r<2\sqrt{2}M_0\|u_0\|_{H^{\beta}},$ in which the constant $M_0>0$ is taken from the Remark \ref{rmk-1}. We recall our notation (\ref{notation}),  that is, for a given $u\in C([0,T];H^{\beta})$ we have that $\|u(t)\|_{H^{\beta}}\leq\displaystyle\sup_{t\in [0,T]}\|u(t)\|_{H^{\beta}}=\|u\|_{\infty},$ for  all $t$. Furthermore, we denote the closed ball of radius $r$ on $C([0,T];H^{\beta})$ as
\begin{equation*}
    B_r=\left\{u\in C([0,T];H^{\beta}):\|u\|_{\infty}\leq r\right\}.
\end{equation*}

 Next, under these considerations we define the nonlinear operator $\Phi_{u_0}:B_r\to B_r$  by

\begin{equation}\label{Trans1}
\begin{split}
\Phi_{u_0}(u)(t,x)&=\mathcal{F}^{-1}[E_{\alpha}((-it)^{\alpha}|\xi|^{\beta})\hat{u}_0(\xi)](x) \\
&+\lambda(-i)^{\alpha}\int_0^t
\mathcal{F}^{-1}[E_{\alpha}((-i(t-s))^{\alpha}|\xi|^{\beta})\mathcal{F}(K_{\gamma}(|u|^2)u)(\xi)](x)ds
\end{split}
\end{equation}
First we claim that $\Phi_{u_0}$ is well defined and $\Phi_{u_0}$ maps $B_r$ to $B_r.$ We notice that $t\mapsto\Phi_{u_0}u(t)$ is continuous. Moreover,
since the mapping $\xi\mapsto E_{\alpha}((-it)^{\alpha}|\xi|^{\beta})$
is bounded by Remark \ref{rmk-1}. Then for $u_0\in H^{\beta}\subseteq L^2$, and $0<\gamma\leq 2\beta.$ It follows  by H\"{o}lder's inequality, that

\begin{align}\label{TransO}
|\mathcal{F}\Phi_{u_0}(u)(t,\xi)|^2
&\leq 2|E_{\alpha}((-it)^{\alpha}|\xi|^{\beta})\hat{u}_0(\xi)|^2 \notag\\
&+2\lambda^2\left|\int_0^tE_{\alpha}((-i(t-s))^{\alpha}|\xi|^{\beta})\mathcal{F}(K_{\gamma}(|u|^2)u)(\xi)ds\right|^2 \notag\\
&\leq 2M_0^2|\hat{u}_0(\xi)|^2 \notag\\
&+2\lambda^2 \int_0^t |E_{\alpha}((-i(t-s))^{\alpha}|\xi|^{\beta})|^2ds\cdot \int_0^t|\mathcal{F}(K_{\gamma}(|u|^2)u)(\xi)|^2ds \notag\\
&\leq 2M_0^2|\hat{u}_0(\xi)|^2+2T\lambda^2 M_0^2\int_0^t|\mathcal{F}(K_{\gamma}(|u|^2)u)(\xi)|^2ds,\quad t>0,\xi\in\mathbb{R}^n,
\end{align}

Thus,  from (\ref{TransO}) we obtain that

\begin{equation}\label{S-norm}
\begin{split}
\int_{\mathbb{R}^n}(1+|\xi|^2)^{\beta}|\mathcal{F}\Phi_{u_0}(u)(t,\xi)|^2d\xi
&\leq 2M_0^2\int_{\mathbb{R}^n}(1+|\xi|^2)^{\beta}|\hat{u}_0(\xi)|^2d\xi\\
&+2T\lambda^2 M_0^2\int_0^t\int_{\mathbb{R}^n}(1+|\xi|^2)^{\beta}|\mathcal{F}(K_{\gamma}(|u|^2)u(\xi)|^2d\xi ds \\
&=2M_0^2\|u_0\|_{H^{\beta}}^2+2T\lambda^2 M_0^2\int_0^t\|K_{\gamma}(|u|^2)u\|^2_{H^{\beta}}ds,
\end{split}
\end{equation}

for every $u_0\in H^{\beta}.$ Then, it follows from (\ref{S-norm}) that
\begin{equation}\label{Phi-u}
\begin{split}
\|\Phi_{u_0}(u)\|_{\infty}
&=\sup_{t\in[0,T]}\left(\int_{\mathbb{R}^n}(1+|\xi|^2)^{\beta}|\mathcal{F}\Phi_{u_0}(u)(t,\xi)|^2d\xi\right)^{1/2} \\
&\leq\sqrt{2} M_0\|u_0\|_{H^{\beta}}+\sqrt{2}T\lambda M_0\|K_{\gamma}(|u|^2)u\|_{\infty}.
\end{split}
\end{equation}
Therefore, it suffices  to estimate $\|K_{\gamma}(|u|^2)u\|_{H^{\beta}}$ to ensure that $\Phi_{u_0}(u)\in B_r$.
Thus from the Remark \ref{rmk-3} part (i) for $\beta\in (0,1),$ we have that
\begin{equation}\label{exp-equiv}
\|K_{\gamma}(|u|^2)u\|_{H^{\beta}}\leq \|K_{\gamma}(|u|^2)u\|_{L^2}+\|(-\Delta)^{\beta/2}K_{\gamma}(|u|^2)u\|_{L^2},\quad u\in H^{\beta}.
\end{equation}
In order to estimate both quantities of the right hand side of (\ref{exp-equiv}). We notice that  $K_{\gamma}(|u|^2)\in L^{\infty}$ by Lemma \ref{lem1}. But then $K_{\gamma}(|u|^2)u\in L^2$ and
\begin{equation}\label{des-1}
\|K_{\gamma}(|u|^2)u\|_{L^2}\leq\|K_{\gamma}(|u|^2)\|_{L^{\infty}}\|u\|_{L^2}.
\end{equation}

On the other hand, since $u\in \dot{H}^{\beta}\cap L^{2n/n-\gamma}$ then $K_{\gamma}(|u|^2)\in \dot{H}^{\beta,2n/\gamma}\cap L^{\infty}$,   by Lemma \ref{lem1} together with Lemma \ref{lem3}.  Furthermore let us assume in Proposition \ref{prop1} that  $\sigma=\beta,$ $r=2,p_1=2n/\gamma,q_1=2n/n-\gamma,$ $p_2=\infty,$ $q_2=2.$ But then follows that $K_{\gamma}(|u|^2)u\in \dot{H}^{\beta}.$


Hence,
\begin{equation}\label{des-2}
\begin{split}
\|(-\Delta)^{\beta/2}K_{\gamma}(|u|^2)u\|_{L^2}
&\leq C(\|K_{\gamma}(|u|^2)\|_{H^{\beta,2n/\gamma}}\|u\|_{L^{2n/n-\gamma}}+\|K_{\gamma}(|u|^2)\|_{L^{\infty}}\|u\|_{H^{\beta}}).
\end{split}
\end{equation}

Thus, by (\ref{exp-equiv}) together with (\ref{des-1}),(\ref{des-2}) we obtain that
\begin{equation}\label{K-u}
\begin{split}
\|K_{\gamma}(|u|^2)u\|_{H^{\beta}}
&\leq C(\|K_{\gamma}(|u|^2)\|_{L^{\infty}}\cdot\left(\|u\|_{L^2}+\|u\|_{H^{\beta}}\right)+\|K_{\gamma}(|u|^2)\|_{H^{\beta,2n/\gamma}}\|u\|_{L^{2n/n-\gamma}})\\
&\leq C(\|K_{\gamma}(|u|^2)\|_{L^{\infty}}\|u\|_{H^{\beta}}+
\|K_{\gamma}(|u|^2)\|_{H^{\beta,2n/\gamma}}\|u\|_{L^{2n/n-\gamma}}),
\end{split}
\end{equation}
for some constant $C.$

Therefore, because of (\ref{h2}), (\ref{K-u}), Lemma \ref{lem1}, and Lemma \ref{lem3} we obtain that
\begin{equation}\label{K-3}
\begin{split}
\|K_{\gamma}(|u|^2)u\|_{H^{\beta}}
&\leq C(\|u\|^2_{\dot{H}^{\gamma/2}}\|u\|_{H^{\beta}}+\|u\|_{H^{\beta}}\|u\|^2_{L^{2n/n-\gamma}})   \\
&\leq C\|u\|^3_{H^{\beta}}.
\end{split}
\end{equation}

Hence, from (\ref{Phi-u}) and (\ref{K-3}) we obtain,
\begin{equation*}
\begin{split}
\|\Phi_{u_0}(u)\|_{\infty}
&\leq \sqrt{2}M\|u_0\|_{H^{\beta}}+3\sqrt{2}C\sqrt{T}\lambda M_0\|u\|^3_{\infty} \\
&\leq\frac{r}{2}+3\sqrt{2}T\lambda C M_0r^3\leq r,
\end{split}
\end{equation*}
if $T>0$ is small enough, we can conclude that the operator $\Phi_{u_0}$ leaves the closed ball $B_r$ invariant.

Next, we show that $\Phi_{u_0}$ is an operator Lipschitz for $T$ sufficiently small. In what follows we assume that $u,v$ belongs to $B_r\subseteq C([0,T];H^{\beta}),$ then we have that
\begin{equation*}
\|\Phi_{u_0}(u)-\Phi_{u_0}(v)\|_{\infty}=\sup_{t\in [0,T]}\|(\Phi_{u_0}(u)-\Phi_{u_0}(v))(t,\cdot)\|_{H^{\beta}},
\end{equation*}
in which we denote
\begin{equation*}
\begin{split}
\Gamma_{u_0}(u,v)(t)&:=(\Phi_{u_0}(u)-\Phi_{u_0}(v))(t) \\
&=\lambda(-i)^{\alpha}\int_0^t \mathcal{F}^{-1}\left[E_{\alpha}((-i(t-s))^{\alpha}|\xi|^{\beta})
[\mathcal{F}(K_{\gamma}(|u|^2)u)(\xi)
-\mathcal{F}(K_{\gamma}(|v|^2)v)(\xi)]\right](x)ds.
\end{split}
\end{equation*}
Then, according to Remark \ref{rmk-1}, H\"{o}lder inequality and the last equality we get that
\begin{equation*}
\begin{split}
\|\Gamma_{u_0}(u,v)\|_{H^{\beta}}^2
&=\int_{\mathbb{R}^n}|\mathcal{F}(\Phi_{u_0}(u)-\Phi_{u_0}(v))(t,\xi)|^2d\mu(\xi),\quad d\mu(\xi):=(1+|\xi|^2)^{\beta}d\xi\\
&=\lambda^2\int_{\mathbb{R}^n}\left|\int_0^t E_{\alpha}((-i(t-s))^{\alpha}|\xi|^{\beta})
\cdot [\mathcal{F}(K_{\gamma}(|u|^2)u)(\xi)-\mathcal{F}(K_{\gamma}(|v|^2)v)(\xi)]ds\right|^2d\mu(\xi) \\
&\leq \lambda^2\int_{\mathbb{R}^n}\left(\int_0^t|E_{\alpha}((-i(t-s))^{\alpha}|\xi|^{\beta})|^2ds \right. \\
& \cdot\left.\int_0^t|\mathcal{F}(K_{\gamma}(|u|^2)u)(\xi)-\mathcal{F}(K_{\gamma}(|v|^2)v)(\xi)|^2ds \right)d\mu(\xi) \\
&\leq \lambda^2 M_0^2 T\int_0^T\int_{\mathbb{R}^n}(1+|\xi|^2)^{\beta}|\mathcal{F}(K_{\gamma}(|u|^2)u)(\xi)-\mathcal{F}(K_{\gamma}(|v|^2)v)(\xi)|^2d\xi ds\\
&\leq (\lambda T M_0)^2\|K_{\gamma}(|u|^2)u-K_{\gamma}(|v|^2)v\|_{\infty}^2,
\end{split}
\end{equation*}
that is, $\|\Phi_{u_0}(u)-\Phi_{u_0}(v)\|_{\infty}\leq \lambda TM_0\|K_{\gamma}(|u|^2)u-K_{\gamma}(|v|^2)v\|_{\infty},$ for $\lambda\neq0,T>0, M_0$ positive constant. In fact, by the Lemma \ref{Lips0} we find the following estimate

\begin{equation*}
\begin{split}
\|K_{\gamma}(|u|^2)u-K_{\gamma}(|v|^2)v\|_{H^{\beta}}
&\leq C(\|u\|^2_{H^{\beta}}+\|u\|_{H^{\beta}}+\|v\|^2_{H^{\beta}}+\|u\|_{H^{\beta}}\|v\|_{H^{\beta}})\cdot\|u-v\|_{H^{\beta}}\\
&\leq C(3r^2+r)\cdot\|u-v\|_{H^{\beta}},
\end{split}
\end{equation*}
for some constant $C$. In this way we have that
\begin{equation*}
\begin{split}
\|\Phi_{u_0}(u)-\Phi_{u_0}(v)\|_{\infty}
&=\sup_{t\in [0,T]}\|(\Phi_{u_0}(u)-\Phi_{u_0}(v))(t,\cdot)\|_{H^{\beta}} \\
&\leq (3r^2+r)T\lambda CM_0\|u-v\|_{\infty},
\end{split}
\end{equation*}
then, if we assume $(3r^2+r)T\lambda CM_0<1$, we get that $\Phi_{u_0}$ defines a contraction on closed ball $B_r$.

It remains to prove the continuous dependence of $\Phi(u(t))=\Phi_{u_0}(u(t))$ with respect to $u_0,$ we notice that if $u,v$ are the corresponding mild solutions of (\ref{Hartree2}) with initial data $u_0,v_0$, respectively. Thus, we have that

\begin{align}\label{cont0}
u(t,x)-v(t,x)&=\mathcal{F}^{-1}[E_{\alpha}((-it)^{\alpha}|\xi|^{\beta})(\hat{u}_0(\xi)-\hat{v}_0(\xi))](x)+\lambda(-i)^{\alpha}\int_0^t \mathcal{F}^{-1}[E_{\alpha}((-i(t-s))^{\alpha}|\xi|^{\beta}) \notag\\
&\cdot(\mathcal{F}(K_{\gamma}(|u(s,\cdot)|^2)u(s,\cdot))(\xi)-\mathcal{F}(K_{\gamma}(|v(s,\cdot)|^2)v(s,\cdot))(\xi))](x)ds.
\end{align}
From (\ref{cont0}) we have that
\begin{align}\label{cont1}
\widehat{u(t,\cdot)}(\xi)-\widehat{v(t,\cdot)}(\xi)&=E_{\alpha}((-it)^{\alpha}|\xi|^{\beta})(\hat{u}_0(\xi)-\hat{v}_0(\xi))+\lambda(-i)^{\alpha}\int_0^t
E_{\alpha}((-i(t-s))^{\alpha}|\xi|^{\beta})   \notag\\
&\cdot [\mathcal{F}(K_{\gamma}(|u(s,\cdot)|^2)u(s,\cdot))(\xi)-\mathcal{F}(K_{\gamma}(|v(s,\cdot)|^2)v(s,\cdot))(\xi)]ds.
\end{align}
Therefore from (\ref{cont1}) together with the fact that
\begin{equation*}
\|K_{\gamma}(|u(s,\cdot)|^2)u(s,\cdot)-K_{\gamma}(|v(s,\cdot)|^2)v(s,\cdot)\|_{H^{\beta}}
\leq C\|u(s,\cdot)-v(s,\cdot)\|_{H^{\beta}},\quad u,v\in B_r\subseteq H^{\beta},
\end{equation*}
by Lemma \ref{Lips0}.
Next we denote by,
\begin{equation*}
A=2\int_{\mathbb{R}^n}|E_{\alpha}((-it)^{\alpha}|\xi|^{\beta})|^2|\hat{u}_0(\xi)-\hat{v}_0(\xi)|^2d\mu(\xi),
\end{equation*}
 and
\begin{equation*}
B=2\lambda^2\int_{\mathbb{R}^n}\left|\int_0^tE_{\alpha}((-i(t-s))^{\alpha}|\xi|^{\beta})\cdot [\mathcal{F}(K_{\gamma}(|u(s)|^2)u(s))-\mathcal{F}(K_{\gamma}(|v(s)|^2)v(s))]ds \right|^2d\mu(\xi),
\end{equation*}
Then we obtain that
\begin{equation}\label{est-uv}
\begin{split}
\|u(t,\cdot)-v(t,\cdot)\|^2_{H^{\beta}}
&=\int_{\mathbb{R}^n}(1+|\xi|^2)^{\beta}|\hat{u}(t,\xi)-\hat{v}(t,\xi)|^2d\xi,\\
&\leq A+B.\\
\end{split}
\end{equation}
Since,
\begin{equation}
A\leq 2M_0^2\|u_0-v_0\|^2_{H^{\beta}}
\end{equation}
and
\begin{equation}
B\leq 2T\lambda^2M_0^2\int_0^T\|K_{\gamma}(|u(s)|^2)u(s)-K_{\gamma}(|v(s)|^2)v(s)\|^2_{H^{\beta}} ds
\leq2(T\lambda CM_0)^2\|u-v\|^2_{\infty},
\end{equation}
it then follows that
\begin{equation}\label{est-uv}
\|u(t,\cdot)-v(t,\cdot)\|^2_{H^{\beta}}
\leq 2M_0^2\|u_0-v_0\|^2_{H^{\beta}}+2(T\lambda CM_0)^2\|u-v\|^2_{\infty},
\end{equation}
for each $u,v\in B_r\subseteq H^{\beta}.$

Hence by (\ref{est-uv}) it follows
\begin{equation}\label{est-uv1}
\begin{split}
\|u(t,\cdot)-v(t,\cdot)\|_{H^{\beta}}&\leq \left( 2M_0^2\|u_0-v_0\|^2_{H^{\beta}}+2(T\lambda CM_0)^2\|u-v\|^2_{\infty}\right)^{1/2}\\
                   &\leq \sqrt{2}M_0\|u_0-v_0\|_{H^{\beta}}+\sqrt{2}T\lambda CM_0\|u-v\|_{\infty}.
\end{split}
\end{equation}
Hence, by taking the supremum in $t$  on the left hand side of (\ref{est-uv1}) we obtain that
\begin{equation*}
\|u-v\|_{\infty}\leq \frac{\sqrt{2}M_0}{1-\sqrt{2}T\lambda CM_0}\|u_0-v_0\|_{H^{\beta}},
\end{equation*}
in which $T$ is appropriate.

%
\end{proof}
\paragraph{Acknowledgements:}
This work  has been partially supported by FONDECYT grant  \# 1170571.

\end{document}